\newcommand{\be}{\begin{equation}}
\newcommand{\ee}{\end{equation}}
\newcommand{\bea}{\begin{eqnarray}}
\newcommand{\eea}{\end{eqnarray}}
\newcommand{\ba}{\begin{eqnarray*}}
\newcommand{\ea}{\end{eqnarray*}}
\let\ketbra\undefined
\newcommand{\ketbra}[2]{|#1\rangle\!\langle #2 |}
\newtheorem{theorem}{Theorem}
\newtheorem{lemma}[theorem]{Lemma}
\newtheorem{definition}[theorem]{Definition}
\newtheorem{corollary}[theorem]{Corollary}
\begin{document}

\title{Multivariable QSP and Bosonic Quantum Simulation using Iterated Quantum Signal Processing}

\author{Niladri Gomes}
\affiliation{Lawrence Berkeley National Laboratory, Berkeley, CA 94720 USA}
\author{Hokiat Lim}
\affiliation{{Department of Physics, University of Toronto, Toronto, ON, M5S 2E4, Canada}}
\author{Nathan Wiebe }
\affiliation{Department of Computer Science, University of Toronto, Toronto, ON, M5S 2E4, Canada}
\affiliation{Pacific Northwest National Laboratory, Richland, WA, 99354, USA}
\affiliation{Canadian Institute for Advanced Studies, Toronto, ON, M5G 1M1, Canada}

\begin{abstract}
    We provide in this work a form of Modular Quantum Signal Processing that we call iterated quantum signal processing.  This method recursively applies quantum signal processing to the outputs of other quantum signal processing steps, allowing polynomials to be easily achieved that would otherwise be difficult to find analytically.  We specifically show by using a squaring quantum signal processing routine, that multiplication of phase angles can be approximated and in turn that any bounded degree multi-variate polynomial function of a set of phase angles can be implemented using traditional QSP ideas.  We then discuss how these ideas can be used to construct phase functions relevant for quantum simulation such as the Coulomb potential and also discuss how to use these ideas to obviate the need for reversible arithmetic to compute square-root functions needed for simulations of bosonic Hamiltonians.
\end{abstract}

\maketitle
\section{Introduction}
Quantum signal processing (QSP) \cite{Low_2019,low2016methodology,gilyen2019quantum,motlagh2023generalized,rossi2022multivariable,Martyn_2021,alexis2024quantum} has emerged as a highly effective algorithmic technique within quantum computing.
The central idea of quantum signal processing is to provide a method that gives a polynomial transformation of a blackbox input rotation by interspersing sequences of this rotation with predefined rotations.  This idea has been applied to a number of significant applications such as optimal Hamiltonian simulation methods~\cite{low2019hamiltonian,gilyen2019quantum,Martyn_2021,berry2024doubling} and has emerged as a more qubit efficient method for linear systems algorithms than older linear combinations of unitaries based approaches~\cite{childs2012hamiltonian,childs2017quantum,gilyen2019quantum}.

The original version of quantum signal processing enables either even or odd polynomial transformations to be carried out on the matrix elements of the signal unitary \cite{low2016methodology}, but this has been generalized in a number of important ways.  The parity restrictions on the polynomial have been lifted by~\cite{motlagh2023generalized} and to the more general setting of transforming the singular values or eigenvalues of arbitrary matrices \cite{Gily_n_2019,dong2022ground,low2024quantum} and recently has been generalized to the case of operators acting on infinite dimensional vector spaces \cite{dong2022infinite} and also has been generalized to a composable form of QSP in~\cite{rossi2023modular}. 

There are several important challenges that remain in quantum signal processing.  One of the most notable issues arises from the fact that multivariate analogues of quantum signal processing are challenging to both design and/or find rotation angles for~\cite{Rossi_2022,rossi2023modular, rossi2022multivariable}. Further, even for single variable functions it can be challenging to find explicit polynomials that adequately approximate the target function while conforming to the assumptions needed for quantum signal processing.  

We take a step towards addressing both of these issues here by providing a paradigm for combining QSP sequences together to build polynomial transformations that would be difficult to achieve directly for the single qubit (${\rm SU}(2)$) Quantum Signal Processing~\cite{low2016methodology}.  We first show that we can recursively construct a form of multi-variable QSP within our framework by chaining together QSP sequences to add the rotation angles and then to apply QSP to multiply the rotation angles through building a QSP that we employ to square the sum of the rotation angles and then subtract off all terms save the product.  This gives us an elementary approach to implement any multivariate function of a set of single qubit unitaries.

The second application that we consider involves
applying our method to simulate bosonic systems. It has been shown that bosonic Hamiltonians can be simulated using qubitization and linear combination of unitaries \cite{peng2023quantum}. In spite of being sparse, simulation of bosonic Hamiltonians using these methods requires a large number of ancillary qubits and running a large amount of arithmetic (performing the square root of the matrix elements, as we will see later) which make the algorithms extremely challenging for near term quantum computers. In our work, we take advantage of the sparsity and use iterated QSP to implement the complex arithmetic using only two ancillary qubits.  In contrast, existing approaches to computing square roots on quantum computers using reversible circuitry can take thousands of qubits~\cite{haener2018quantum}.  This potentially opens up the possibility of near-term quantum simulation of bosonic systems with cutoffs in occupation greater than $1$.

The layout of the paper is as follows.  In Section~\ref{sec:background} we review the QSP formalism and discuss the logarithmic block-encoding technique that we use to construct a block encoding of the generator of a unitary.  We then discuss the application of these ideas to provide a specific form of Hierarchical QSP that we refer to as iterated QSP in Section~\ref{sec:iterQSP}.  This will be needed in our subsequent work to allow us to easily perform a polynomial transformation on the generator of the unitary in question.  Section~\ref{sec:MQSP} shows how to use the building blocks of QSP to implement a multivariate function of a set of rotations.  We also apply this methodology to provide a qubit and memory efficient method to simulate exponentials of bosonic creation/annihilation operators in Section~\ref{sec:boson} before concluding.

\section{Background} \label{sec:background}
There are multiple related quantum signal processing (QSP) conventions in the literature. For this work, we will employ the $(W_{Z}, S_{X}, \langle 0 | \cdot | 0 \rangle)$ and $(W_{X}, S_{Z}, \langle + | \cdot | + \rangle)$-QSP conventions which we now define.
\begin{definition}[QSP conventions] 
\label{def:qsp conventions}
\begin{enumerate}
Let $W_Z$ and $W_X$ be denoted as signal unitaries that are defined by
\item Under the $(W_{Z}, S_{X}, \langle 0 | \cdot | 0 \rangle)$-QSP convention, our given signal unitary $W_{Z}$ is defined as

\[
W_{Z} \left(\theta \right) \equiv e^{i \frac{\theta}{2} Z} =
  \begin{bmatrix}
    e^{i \frac{\theta}{2}} & 0 \\
    0 & e^{-i \frac{\theta}{2}}
  \end{bmatrix}
  = \begin{bmatrix}
    \omega & 0 \\
    0 & \omega^{*}
  \end{bmatrix}  = U 
\]
where $\omega \equiv e^{i \frac{\theta}{2}}$ and $Z$ denotes the Pauli-Z operator. The associated signal processing rotation operator is $S_{X} \left( \phi \right) = e^{i \phi X}$ and the signal basis for this convention is  $\left\lbrace | 0 \rangle , |1 \rangle \right\rbrace$. The QSP operation sequence here is defined as
\begin{equation*}
U_{\vec{\phi}}= e^{i \phi_{0} X} \prod_{k=1}^{d} W_{Z} \left( \theta \right) e^{i \phi_{k} X}
\end{equation*}
for the tuple of phase angles $\vec{\phi} = \left(\phi_{0}, \phi_{1}, \dots \phi_{d} \right) \in \mathbb{R}^{d+1}$.
\item 
Under the $(W_{X}, S_{Z}, \langle + | \cdot | + \rangle)$-QSP convention, our given signal unitary $W_{X}$ is defined as

\[
W_{X} \left(\theta \right)  \equiv e^{i \frac{\theta}{2} X} =
  \begin{bmatrix}
    a & i \sqrt{1 - a^{2}} \\
    i \sqrt{1 - a^{2}} & a
  \end{bmatrix}
\]
where $a \equiv cos(\frac{\theta}{2})$ and $X$ denotes the Pauli-X operator. The signal processing rotation operator is $S_{Z} \left( \phi \right) = e^{i \phi Z}$ and the signal basis is $\left\lbrace | + \rangle , |- \rangle \right\rbrace$. As before, the QSP operation sequence here is similarly defined as
\begin{equation*}
U_{\vec{\phi}}= e^{i \phi_{0} Z} \prod_{k=1}^{d} W_{X} \left( \theta \right) e^{i \phi_{k} Z}.
\end{equation*}
\end{enumerate}
\end{definition}
We note that both these conventions are equally expressive (in terms of the achievable polynomials) as
\begin{equation*}
\langle 0 | e^{i \phi_{0} X} \prod_{k=1}^{d} W_{Z} \left( \theta \right) e^{i \phi_{k} X} | 0 \rangle = \langle + | e^{i \phi_{0} Z} \prod_{k=1}^{d} W_{X} \left( \theta \right) e^{i \phi_{k} Z} | + \rangle 
\end{equation*}
since for the Hadamard gate $H = \frac{1}{\sqrt{2}} \begin{psmallmatrix} 1 & 1\\ 1 & -1\end{psmallmatrix}$, we have $H| 0 \rangle = |+ \rangle$, $HZH = X$, $HXH = Z$, and $H^{2} = I$.

Given a rotation operator $W_{X}(\theta)$,  for notational convenience (and a slight abuse of notation) we introduce $a = \text{cos}(\frac{\theta}{2})$ as an argument for $W_{X}$ such that
\be 
W_{X}(\theta) := e^{i \frac{\theta}{2} X}= \mqty[a & i\sqrt{1-a^2} \\ i\sqrt{1-a^2} & a] = W_{X}(a)
\ee
and the QSP sequence $U_{\vec{\phi}}$ produces a matrix which may be expressed in terms of
polynomials $P(a)$ and $Q(a)$ as
\be 
e^{i\phi_{0}Z}\prod_{k=1}^{d}W_{X}(a)e^{i\phi_{k}Z} = \mqty[ P(a) & iQ(a)\sqrt{1-a^2} \\  iQ(a)\sqrt{1-a^2} & P(a)]
\ee 
such that:
\begin{enumerate}
    \item $deg(P) \leq d,\; deg(Q) \leq d-1$,
    \item $P$ has parity $d$ mod $2$ and $Q$ has parity $(d-1)$ mod $2$,
    \item $\left|P \right|^{2} + \left(1 - a^{2} \right) \left| Q \right|^{2} = 1$.
\end{enumerate}
We now state the key QSP result necessary for this work that is sufficiently expressive in terms of the achievable polynomials under both the $(W_{Z}, S_{X}, \langle 0 | \cdot | 0 \rangle)$ and $(W_{X}, S_{Z}, \langle + | \cdot | + \rangle)$-QSP conventions.

\begin{theorem}[QSP Theorem] 
\label{thm:qsp}
There exists a QSP phase sequence $\vec{\phi} = \left(\phi_{0}, \phi_{1}, \dots \phi_{d} \right) \in \mathbb{R}^{d+1}$ such that
\begin{equation}
{\rm Poly}(a) = \langle + | e^{i \phi_{0} Z} \prod_{k=1}^{d} W_{X} \left( a \right) e^{i \phi_{k} Z} | + \rangle = \langle 0 | e^{i \phi_{0} X} \prod_{k=1}^{d} W_{Z} \left( \theta \right) e^{i \phi_{k} X} | 0 \rangle 
\end{equation}
for $a \in \left[ -1,1 \right]$, and for real polynomial Poly $\in \mathbb{R}\left[a \right]$ if and only if:
\begin{enumerate}
    \item  deg(Poly) $\leq d$
    \item \text{Poly} has parity $d$ mod $2$
    \item $\forall a \in \left[-1,1 \right], \left|Poly(a) \right| \leq 1 $.
\end{enumerate}
\end{theorem}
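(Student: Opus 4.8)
The plan is to prove the ``if and only if'' by establishing the two directions separately, relying on the structural constraints (1)--(3) on $P$ and $Q$ that were derived just before the statement. Since the two QSP conventions are related by Hadamard conjugation (as already shown in the excerpt), it suffices to work entirely in the $(W_X, S_Z, \langle + | \cdot | + \rangle)$ convention, where ${\rm Poly}(a) = \langle + | U_{\vec\phi} | + \rangle$.

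For the \emph{necessity} direction (QSP sequence $\Rightarrow$ conditions on ${\rm Poly}$), I would first compute $\langle + | U_{\vec\phi} | + \rangle$ from the matrix form
\[
U_{\vec\phi} = \mqty[ P(a) & iQ(a)\sqrt{1-a^2} \\ iQ(a)\sqrt{1-a^2} & P(a)],
\]
which gives $\langle + | U_{\vec\phi} | + \rangle = P(a) + iQ(a)\sqrt{1-a^2}$. The degree and parity bounds on ${\rm Poly}$ then follow directly from constraints (1) and (2) on $P$ and $Q$, together with the observation that $\sqrt{1-a^2}$ carries parity $1$, so $iQ(a)\sqrt{1-a^2}$ has the same parity as $P(a)$. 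For the sup-norm bound, I would take the real part to get ${\rm Re}[\langle + | U_{\vec\phi} | + \rangle] = {\rm Re}[P(a)]$, and since we seek a \emph{real} polynomial we identify ${\rm Poly}(a) = {\rm Re}[P(a)]$; the bound $|{\rm Poly}(a)| \le 1$ then follows from unitarity, i.e. from $|P|^2 + (1-a^2)|Q|^2 = 1$ which forces $|P(a)| \le 1$ on $[-1,1]$, hence $|{\rm Re}[P(a)]| \le 1$.

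For the \emph{sufficiency} direction (conditions $\Rightarrow$ existence of $\vec\phi$), I would follow the standard completion-and-decomposition argument. Given a real ${\rm Poly}$ satisfying (1)--(3), the first step is a \emph{complementary polynomial} construction: find a polynomial $Q$ of degree $\le d-1$ and appropriate parity such that $|{\rm Poly}(a)|^2 + (1-a^2)|Q(a)|^2 = 1$ on $[-1,1]$; this is done by applying the Fej\'er--Riesz lemma to the polynomial $1 - {\rm Poly}(a)^2$, which is nonnegative on $[-1,1]$ by condition (3), to factor it as $(1-a^2)|Q(a)|^2$ after extracting roots at $a = \pm 1$. The second step is to show, by downward induction on $d$, that any pair $(P, Q)$ satisfying constraints (1)--(3) can be ``peeled'': one extracts the leading behavior to determine $\phi_d$, multiplies by $W_X(a)^{-1} e^{-i\phi_d Z}$ (or the appropriate inverse), and checks that the resulting pair satisfies the same constraints with $d$ decremented, the base case $d=0$ being a bare phase gate. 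This reconstructs the full angle sequence $\vec\phi$.

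The main obstacle I expect is the complementary polynomial / Fej\'er--Riesz step: one must verify that $1 - {\rm Poly}(a)^2$ has the right root structure at $a = \pm 1$ (forced by parity and the degree count so that the leftover factors as a perfect square of a real polynomial of the correct parity), and handle the subtlety that ${\rm Poly}$ real does not make $P$ real --- the completion $P = {\rm Poly} + i(\text{something})$ generally needs an imaginary part, and one must confirm a valid choice exists consistent with unitarity. The peeling induction is comparatively mechanical once the right $(P,Q)$ pair is in hand, but care is needed to track how parity flips under each extraction of a factor $W_X(a)$. I would note that this theorem is essentially the standard QSP characterization (in the $\langle + | \cdot | + \rangle$ form), so I would likely cite the established proofs in~\cite{low2016methodology, gilyen2019quantum, Martyn_2021} for the full technical details rather than reproduce them, emphasizing only the translation between conventions via the Hadamard identity already recorded above.
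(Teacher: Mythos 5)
The paper offers no proof of Theorem~\ref{thm:qsp}: it is stated as a background result imported verbatim from the QSP literature (Low--Chuang, Gily\'en et al., Martyn et al.), so there is nothing in-paper to compare your argument against, and deferring the technical details to those references --- as you propose to do --- is exactly what the paper itself does. Your sketch is the standard proof of this characterization (necessity from the matrix form and the unitarity constraint $|P|^2+(1-a^2)|Q|^2=1$; sufficiency via Fej\'er--Riesz completion of $1-{\rm Poly}(a)^2$ followed by inductive peeling of the phases) and is sound in outline; you also correctly identify the one genuine subtlety, namely that $\langle+|U_{\vec\phi}|+\rangle$ is in general ${\rm Re}[P(a)]+i\,{\rm Re}[Q(a)]\sqrt{1-a^2}$ rather than exactly a real polynomial, so the displayed equality should strictly be read as an identity for the real part (or one must arrange ${\rm Re}[Q]=0$) --- an imprecision in the theorem statement itself rather than in your argument.
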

Recently, a new form of QSP theorem has been found that removes the requirement that the polynomial has a fixed parity.  However, this comes at the price of requiring two parameterized rotations rather than one per step as well as requiring a slightly different method for finding the phase angles for the QSP.  Our work will use both of these approaches.  We state a form of this generalized QSP theorem below.

\begin{theorem}[Generalized QSP Theorem]\label{thm:GQSP}
    $\forall d \in \mathbb{N}, \exists\; \vec{\theta}, \vec{\omega}\in \mathbb{R}^{d+1}, \lambda\in \mathbb{R}\; such \; that$ 
    \begin{equation}
        U_{\vec{\theta}, \vec{\omega}} = \left( \prod_{j=1}^{d} R(\theta_j, \omega_j,0) (\dyad{0} \otimes U + \dyad{1} \otimes I) \right) R(\theta_0, \omega_0,\lambda) = \mqty[P(U) & . \\ Q(U) & .]
        \label{eq:gqsp}
    \end{equation}
    \begin{enumerate}
    \item $P,Q:\mathbb{T}\mapsto \mathbb{C}$ and $deg(P),\; deg(Q) \le d$
    \item $ \abs{P(a)}^2 + \abs{Q(a)}^2 = 1$, $\forall a \in \mathbb{T}\;$, where $\mathbb{T}=\{ a\in \mathbb{C}:\abs{a}=1\}$ 
    \item $R(\theta, \omega,\lambda) = \mqty[e^{i(\lambda + \omega)}\cos(\theta) & e^{i\omega \sin(\theta)}\\ e^{i\lambda \sin(\theta)} & -\cos(\theta)] \otimes I$.
\end{enumerate}
\end{theorem}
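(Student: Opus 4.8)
The plan is as follows. The statement has two layers. The literal existence claim follows as soon as one shows that \emph{every} choice of $(\vec\theta,\vec\omega,\lambda)$ makes the circuit in \eqref{eq:gqsp} collapse to a block matrix of the stated form, so the real content is (i) establishing that structural fact --- including the degree bounds on $P,Q$ and the normalization $\abs{P(a)}^2+\abs{Q(a)}^2=1$ on $\mathbb{T}$ --- and (ii), for the stronger characterization that is actually invoked later, showing that the available angles suffice to realize \emph{any} admissible pair $(P,Q)$. I would handle (i) by an induction on $d$ and (ii) by a degree-lowering ``peeling'' recursion.

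For (i): since $U$ is unitary I first diagonalize it and fix an eigenvalue $a\in\mathbb{T}$; on the corresponding invariant subspace $\mathrm{span}\{\ket{0},\ket{1}\}\otimes\ket{\psi_a}$ the operator $\dyad{0}\otimes U+\dyad{1}\otimes I$ acts as $\mathrm{diag}(a,1)$ and each $R(\theta_j,\omega_j,\cdot)$ acts as a fixed $a$-independent $2\times2$ matrix. Then I induct on $d$. The base case $d=0$ is $R(\theta_0,\omega_0,\lambda)$, whose first column is $(e^{i(\lambda+\omega_0)}\cos\theta_0,\, e^{i\lambda}\sin\theta_0)$: constant polynomials $P,Q$ with $\abs{P}^2+\abs{Q}^2=\cos^2\theta_0+\sin^2\theta_0=1$. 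For the step, if the length-$(d-1)$ product has first column $(P_{d-1}(a),Q_{d-1}(a))$ with $\deg\le d-1$, then premultiplying by $R(\theta_d,\omega_d,0)\,\mathrm{diag}(a,1)$ first multiplies the top entry by $a$ and then applies a constant unitary mixing the two entries, so the new entries are polynomials of degree $\le d$; this gives condition~1. Condition~3 is then immediate, since each $R(\cdot)$ has orthonormal columns (by the Pythagorean identity) and $\mathrm{diag}(a,1)$ is unitary for $a\in\mathbb{T}$, so the whole product is unitary and its first column has unit norm. (No parity restriction survives, precisely because the inserted block is $\mathrm{diag}(a,1)$ rather than the parity-preserving signal block of the original QSP conventions.)

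For (ii): given a target $(P,Q)$ of degree $\le d$ with $\abs{P(a)}^2+\abs{Q(a)}^2=1$ on $\mathbb{T}$, write $P=\sum_k p_k a^k$ and $Q=\sum_k q_k a^k$. To strip off the outermost factor one demands that $\mathrm{diag}(a^{-1},1)\,R(\theta_d,\omega_d,0)^{\dagger}(P,Q)^{\mathsf{T}}$ again have polynomial entries of degree $\le d-1$; this amounts to two scalar conditions --- the first coordinate's numerator must have vanishing constant term and the second coordinate must have vanishing $a^d$ coefficient --- and they are consistent because the normalization forces the coefficient identity $p_0\overline{p_d}+q_0\overline{q_d}=0$. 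Solving the pair yields $\tan\theta_d=\abs{q_d}/\abs{p_d}$ together with a unimodular value for $e^{-i\omega_d}$; the reduced pair automatically inherits the normalization (a unitary was removed) with degree $\le d-1$, and one recurses down to the $d=0$ instance, where $\lambda$ absorbs the leftover global phase.

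The step I expect to be the genuine obstacle is making the peeling recursion in (ii) airtight, not the structural half (i), which is a short induction plus ``a product of unitaries is unitary.'' The angle extraction divides by $p_d,q_d,p_0,q_0$, so one must handle the degenerate cases carefully --- a vanishing leading or trailing coefficient, $P\equiv0$ or $Q\equiv0$, or $(P,Q)$ sharing a common factor of $a$ --- choosing the correct branch for each $\theta_j,\omega_j$ so that the degree strictly drops at every step and the recursion terminates at a genuine $d=0$ instance. That case analysis, rather than any single clever identity, is where the work lies.
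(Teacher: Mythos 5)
The paper does not prove Theorem~\ref{thm:GQSP} at all --- it is imported verbatim from~\cite{motlagh2023generalized} and used as a black box --- so there is no in-paper argument to compare against. Your proposal is, in substance, a correct reconstruction of the proof in that reference: the structural half (i) is exactly the standard induction showing that on each eigenspace the circuit reduces to an alternating product of constant $\mathrm{SU}(2)$ rotations with $\mathrm{diag}(a,1)$, whence degree bounds and unit-norm columns; and the achievability half (ii) is the same coefficient-peeling recursion, with the compatibility of the two stripping conditions resting on the Laurent-coefficient identity $p_0\overline{p_d}+q_0\overline{q_d}=0$ extracted from $\abs{P}^2+\abs{Q}^2=1$ on $\mathbb{T}$. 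Two remarks. First, you correctly read the paper's $R(\theta,\omega,\lambda)$ as having off-diagonal entries $e^{i\omega}\sin\theta$ and $e^{i\lambda}\sin\theta$ (the displayed matrix has a typo placing $\sin\theta$ inside the exponent); with the literal matrix the columns would not be orthonormal and nothing works. Second, you correctly identify that the theorem as literally written is only an existence claim (which is trivially satisfied by part (i) alone for any angles), whereas the form actually invoked later in the paper --- ``for any $P$ satisfying the assumptions there exist angles realizing it'' --- is your part (ii); proving that stronger converse is the right call. The degenerate cases you flag (vanishing $p_d$, $q_d$, $p_0$, or $q_0$) do all resolve via the same orthogonality relation (e.g.\ $q_d=0$, $p_d\neq 0$ forces $p_0=0$, so $\theta_d=0$ works), so the recursion terminates; this is genuinely the only place requiring care, and your sketch is sound.
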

By interleaving the above $R(\theta, \omega,0)$ and $W$- operations we can block encode polynomial transformations of $a$ without further assumptions. We only need to specify a single $\lambda$ since we can absorb the other instances into the instance of $\omega$ before them.

Logarithmic block-encoding of a given signal unitary plays a key role in our method of iterated QSP. We now state a key lemma that will be repeatedly utilized in our later constructions.
\begin{lemma}[Logarithmic block-encoding]
\label{lem:log}
If the spectral norm of an operator $ \mathcal{H} $ is at most $\frac{1}{2}$, given a signal unitary $U = e^{i \mathcal{H}}$, and for $\epsilon \in \left(0, \frac{1}{2} \right]$, a $ \left( \frac{\pi}{2}, 2 ,\epsilon \right)$-block-encoding of $\mathcal{H}$ can be implemented with $\mathcal{O}\left(\log(\frac{1}{\epsilon})\right)$ uses of controlled-$U$ and its inverse, with $\mathcal{O}\left(\log(\frac{1}{\epsilon})\right)$ two-qubit gates and a single ancilla qubit.
\end{lemma}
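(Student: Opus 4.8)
The plan is to reduce the claim to two standard ingredients: a one-ancilla block-encoding of $\sin(\mathcal{H})$ obtained from $U=e^{i\mathcal{H}}$ by a Hadamard test, followed by a quantum eigenvalue transformation that inverts the sine. First I would build the $\sin(\mathcal{H})$ block-encoding directly: prepare the ancilla with a Hadamard, apply $\dyad{0}\otimes U+\dyad{1}\otimes U^{\dagger}$ together with an appropriate single-qubit phase on the ancilla, and un-prepare with a Hadamard; the $\langle 0|\cdot|0\rangle$ block of this circuit is exactly $\frac{U-U^{\dagger}}{2i}=\sin(\mathcal{H})$, so we obtain an exact $(1,1,0)$-block-encoding of $\sin(\mathcal{H})$ using $\mathcal{O}(1)$ controlled-$U$ and controlled-$U^{\dagger}$ and a single ancilla. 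I use $\sin$ rather than $\cos$ because $\sin$ is odd, hence invertible and sign-preserving on the spectrum of $\mathcal{H}$, whereas $\arccos(\cos\mathcal{H})=\abs{\mathcal{H}}$ would discard the sign.

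Next I would apply the (matrix/QSVT) eigenvalue transformation built from the QSP phases of Theorem~\ref{thm:qsp} to this block-encoding, using a polynomial $p$ of odd parity and degree $d=\mathcal{O}(\log(1/\epsilon))$ that approximates $\tfrac{2}{\pi}\arcsin$. The analytic input is that $\norm{\mathcal{H}}\le\tfrac12$ forces the spectrum of $\sin(\mathcal{H})$ into $[-\sin(1/2),\sin(1/2)]\subset(-1,1)$, strictly away from the branch points $\pm1$, and that $\tfrac12<\tfrac{\pi}{2}$ makes $\arcsin(\sin\mathcal{H})=\mathcal{H}$ hold exactly. Concretely, I would take $p=\tfrac{2}{\pi}T_N$ with $T_N$ the degree-$(2N{+}1)$ truncation of the Maclaurin series of $\arcsin$: since its coefficients are nonnegative and sum to $\arcsin(1)=\pi/2$, we get $\abs{T_N(x)}\le\pi/2$ and hence $\abs{p(x)}\le1$ on all of $[-1,1]$ (the boundedness hypothesis required by Theorem~\ref{thm:qsp}), while on $\abs{x}\le\sin(1/2)<1$ the remaining tail is geometrically small, so $\abs{p(x)-\tfrac{2}{\pi}\arcsin x}\le\epsilon$ once $N=\mathcal{O}(\log(1/\epsilon))$. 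The transformation then block-encodes $p(\sin\mathcal{H})$, and $\norm{\tfrac{\pi}{2}p(\sin\mathcal{H})-\mathcal{H}}=\tfrac{\pi}{2}\norm{p(\sin\mathcal{H})-\tfrac{2}{\pi}\mathcal{H}}\le\tfrac{\pi}{2}\epsilon$; rescaling $\epsilon$ by a constant yields the advertised $(\pi/2,\cdot,\epsilon)$-block-encoding, with the subnormalization $\pi/2$ being precisely the reciprocal of the $2/\pi$ that was needed to fit $\arcsin$ into $[-1,1]$.

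It then remains to count resources. The degree-$d$ transformation interleaves $d$ copies of the $\sin(\mathcal{H})$ block-encoding (and its inverse) with $d{+}1$ single-qubit phase rotations $e^{i\phi_k Z}$ on the ancilla; each block-encoding call costs $\mathcal{O}(1)$ controlled-$U^{(\dagger)}$ and $\mathcal{O}(1)$ Clifford gates, so in total we use $\mathcal{O}(\log(1/\epsilon))$ controlled-$U^{(\dagger)}$ and $\mathcal{O}(\log(1/\epsilon))$ two-qubit gates. Because the block-encoding ancilla is one qubit, the projector $2\dyad{0}-I$ is just a single-qubit $Z$, so the projector-controlled phases need no further ancilla, which is consistent with the stated qubit count. (As an alternative native to this paper, one could skip the Hadamard test and run Generalized QSP of Theorem~\ref{thm:GQSP} directly on $U$, seeking a degree-$\mathcal{O}(\log(1/\epsilon))$ Laurent polynomial $P$ with $\abs{P}\le1$ on $\mathbb{T}$ and $P(e^{i\lambda})\approx\tfrac{2}{\pi}\lambda$ on the arc $\lambda\in[-1/2,1/2]$, again with a single ancilla.)

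The step I expect to be the main obstacle is the polynomial approximation in the second paragraph, specifically producing one polynomial that is simultaneously $\epsilon$-close to $\tfrac{2}{\pi}\arcsin$ on the physically relevant subinterval and bounded by $1$ on the whole of $[-1,1]$ --- the latter is not implied by closeness on a subinterval and is exactly what QSP demands. The nonnegativity of the $\arcsin$ Maclaurin coefficients is what makes this clean; had one instead truncated a Chebyshev expansion, controlling the sup-norm on $[-1,1]$ after truncation would require an extra damping or cutoff argument.
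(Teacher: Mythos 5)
Your proposal is correct and follows essentially the same route as the paper: a Hadamard-test-style circuit (the paper's $-i(\langle+|\otimes I)\,cU^{\dagger}(ZX\otimes I)\,cU\,(|+\rangle\otimes I)$ is exactly your one-ancilla block-encoding of $\sin(\mathcal{H})=\frac{U-U^{\dagger}}{2i}$), followed by QET with an odd, degree-$\mathcal{O}(\log(1/\epsilon))$ polynomial approximating $\frac{2}{\pi}\arcsin$ on the subinterval where the spectrum of $\sin(\mathcal{H})$ lives, with boundedness on all of $[-1,1]$ secured by the nonnegative Maclaurin coefficients of $\arcsin$ summing to $\pi/2$. The only cosmetic difference is that you argue the truncation bound directly, whereas the paper delegates it to Lemma~70/Corollary~66 of~\cite{Gily_n_2019} with $\delta=1/2$.
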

\begin{proof}
Proof follows the discussion in~\cite{Gily_n_2019}, reproduced here with slight modifications for completeness.
First we observe that
\begin{equation}
 \sin \left( \mathcal{H} \right) = -i \left( \langle + | \otimes I \right) cU^{\dagger} \left( ZX \otimes I \right) cU \left( | + \rangle \otimes I \right) \label{eq:polyarcsin}
\end{equation}
where $cU$ denotes the controlled-$U$ gate controlled by the first qubit (and $cU^{\dagger}$ similarly that corresponding to $U^{\dagger}$), $X$ denotes the Pauli-X operator and $Z$ denotes the Pauli-Z operator. The quantum circuit associated with \ref{eq:polyarcsin} gives a two-qubit block-encoding of $\sin(\mathcal{H})$ (Corollary 71 of \cite{Gily_n_2019}). Using Lemma 70 of \cite{Gily_n_2019}, if $\delta, \epsilon \in \left(0, \frac{1}{2} \right]$, there exists an efficiently computable odd polynomial $P \in \mathbb{C}[x]$ with real part that corresponds to the real polynomial $P_{\text{R}} \in \mathbb{R}[x]$ of degree $\mathcal{O}\left( \frac{1}{\delta} \log(\frac{1}{\epsilon})\right)$ such that $||P_{R} ||_{[-1,1]} \leq 1$ and
\begin{equation}
\left\Vert P_{\text{R}}(x) - \frac{2}{\pi} {\arcsin(x)} \right\Vert_{[-1+ \delta, 1 - \delta]} \leq \epsilon \label{eq:polyarcsin_bound}
\end{equation}
since for all $x \in \left[-1,1 \right]$ we have $\frac{2}{\pi} \text{arcsin(x)} = \sum_{l=0}^{\infty} {2l \choose l} \frac{2^{-2l}}{2l+1}\frac{2}{\pi}x^{2l+1} $, which implies that $\sum_{l=0}^{\infty} \left\| {2l \choose l} \frac{2^{-2l}}{2l+1}\frac{2}{\pi} \right\| = 1$, thus satisfying the polynomial boundedness condition for Corollary 66 of \cite{Gily_n_2019}. Considering only the real part of the computable odd polynomial $P$ (corresponding to an approximation of $\frac{2}{\pi} \text{arcsin}(x)$), we then utilize QSP to generate an approximation of $\mathcal{H}$ satisfying \ref{eq:polyarcsin_bound}. Taking $\delta = \frac{1}{2}$ above, we thus obtain an $\epsilon$-approximating polynomial $P$ of $\frac{2}{\pi} \text{arcsin}(x)$ over $\left[-\frac{1}{2}, \frac{1}{2} \right]$, thereby achieving a $ \left( \frac{\pi}{2}, 2 ,\epsilon \right)$-block-encoding of $\mathcal{H}$ as claimed.
\end{proof}

\section{Iterated ${\rm SU}(2)$ QSP} \label{sec:iterQSP}
The central idea behind our work is simple.  What we aim to do is to apply a form of modular QSP (as coined in~\cite{rossi2023modular}) in an iterative process to construct our target polynomial.
Specifically, if we have a signal unitary in the $W_X$ basis of $\{e^{i \frac{\theta_i}{2} X}: i\in \{1,\ldots,R\}\}$ and a signal processing rotation operator of the form $S_Z = e^{i \phi Z}$, we would like to ask what unitaries are achievable via QSP using these operations.  The results in this section are stated in the $(W_X,S_Z, \bra{+}\cdot \ket{+})$-QSP convention for concreteness but can be trivially converted to the $(W_Z,S_X, \bra{0}\cdot \ket{0})$-QSP convention as needed via Definition \ref{def:qsp conventions} above through the appropriate use of Hadamard gates.

\begin{theorem}[Iterated ${\rm SU}(2)$ QSP]\label{thm:iteratedQSP}
    Let $\{\theta_1,\ldots, \theta_R\}$ be a set of angles and let $W_X(\theta_j)$ be a set of achieveable signal matrices in the $(W_X,S_Z, \bra{+}\cdot \ket{+})$-QSP convention using a sequence of length $(d_1,\dots,d_{R})$ of fundamental signal or $S_Z$ signal processing rotations.  If $q_k$ is a sequence over the indices $\{1,\ldots, k\}$ and if $P:\mathbb{T}\mapsto \mathbb{C}$ where $\mathbb{T}=\{ a\in \mathbb{C}:\abs{a}=1\}$ is a polynomial function of degree $d$ that satisfies the requirements for the $P$ function in Theorem~\ref{thm:GQSP} then the operation 
    $$
W_X\left(2\arccos\left(P\left(\cos\left(\sum_{s\in q_k} \frac{\theta_s}{2} \right)\right)\right)\right)
    $$
can be constructed using a number of fundamental signal or $W_X$ rotations that is in
$
O\left(d\left( \sum_{s\in q_k} d_s\right) \right).
$
\end{theorem}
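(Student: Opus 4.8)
The plan is to build the target operation by composition, peeling off one layer of structure at a time. First I would construct the signal matrix $W_X(\sum_{s\in q_k}\theta_s/2)$ — or rather, a block encoding of the generator $\tfrac12\sum_{s\in q_k}\theta_s$ — by chaining the individual $W_X(\theta_s)$ blocks together. The natural tool here is Lemma~\ref{lem:log}: given each $U_s = e^{i\theta_s X/2}$ (viewed in the appropriate convention), one first needs a block encoding of the \emph{generator} $\theta_s/2$ up to a rescaling, then sums these generators (a linear combination of unitaries step, or simply composing the rotations since the $X$-generators commute), and finally re-exponentiates. Since all the $W_X(\theta_s)$ share the same Pauli-$X$ generator, their product is directly $W_X(\sum_s \theta_s/2)$ with no extra ancilla cost, so this step is essentially free: $\prod_{s\in q_k} W_X(\theta_s) = W_X\!\left(\sum_{s\in q_k}\theta_s/2\right)$, contributing $\sum_s d_s$ fundamental rotations. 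I would state this observation carefully since it is where the $\sum_{s\in q_k} d_s$ factor in the gate count originates.

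**Applying QSP to the summed rotation.**

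With the signal matrix $W_X\!\left(\sum_{s} \theta_s/2\right)$ in hand, its relevant matrix element is $a = \cos\!\left(\sum_{s\in q_k}\theta_s/2\right)\in[-1,1]\subset\mathbb{T}$-projected. Now I would invoke Theorem~\ref{thm:GQSP} (or Theorem~\ref{thm:qsp} if $P$ has fixed parity): since $P$ satisfies the hypotheses for the $P$-function — degree $\le d$ and the unitarity/normalization condition $|P|^2+|Q|^2=1$ — there exist phase angles $\vec\theta,\vec\omega,\lambda$ of length $O(d)$ such that the generalized QSP sequence built from $d$ copies of the controlled-$W_X(\sum_s\theta_s/2)$ block, interleaved with the $R(\theta_j,\omega_j,0)$ rotations, yields a unitary whose top-left block is $P(a) = P\!\left(\cos\!\left(\sum_{s\in q_k}\theta_s/2\right)\right)$. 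Each of the $d$ invocations of the signal block costs $\sum_s d_s$ fundamental rotations (since that's the cost of one $W_X(\sum_s\theta_s/2)$), giving the claimed $O\!\left(d\,\sum_{s\in q_k} d_s\right)$ total.

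**Re-exponentiating to recover a $W_X$ form.**

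The final step is cosmetic but needs care: the QSP output above is a unitary with $P(a)$ in a block, not literally the matrix $W_X(2\arccos(P(a)))$. To get the stated form I would observe that $W_X(2\arccos(x)) = \begin{psmallmatrix} x & i\sqrt{1-x^2} \\ i\sqrt{1-x^2} & x\end{psmallmatrix}$, i.e. a signal matrix whose defining cosine-parameter equals $P(a)$. Provided $P(a)$ is real-valued on the relevant domain (which follows if one uses the real-polynomial Theorem~\ref{thm:qsp}, or takes the real part as in Lemma~\ref{lem:log}'s proof), the QSP construction directly produces a $\mathrm{SU}(2)$ element of exactly this form — this is the content of the structural decomposition stated between Theorems~\ref{thm:qsp} and \ref{thm:GQSP}, where the QSP sequence is shown to have precisely the $\begin{psmallmatrix} P & iQ\sqrt{1-a^2}\\ iQ\sqrt{1-a^2} & P\end{psmallmatrix}$ form with $|P|^2 + (1-a^2)|Q|^2 = 1$. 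Identifying $x = P(a)$ and noting that the normalization condition then forces the off-diagonal to be $i\sqrt{1-P(a)^2}$ up to the overall $Q$-structure, one reads off that the resulting block is $W_X(2\arccos(P(a)))$ as claimed.

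**Main obstacle.**

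The hard part will be the re-exponentiation/identification step: in general the QSP sequence produces a matrix with independent polynomials $P$ and $Q$ in the off-diagonal, and it is \emph{not} automatic that the off-diagonal equals $i\sqrt{1-P(a)^2}$ — that would require $Q(a)\sqrt{1-a^2} = \pm\sqrt{1-P(a)^2}$, which holds only because of condition~3 ($|P|^2 + (1-a^2)|Q|^2 = 1$) combined with $P$ being real. So the argument must lean on the fact that when $P$ is a \emph{real} polynomial satisfying the completion condition, the full QSP unitary is genuinely the $\mathrm{SU}(2)$ rotation $W_X(2\arccos(P(a)))$ and not merely a unitary with $P(a)$ buried in one corner. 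I would make sure to handle whether we want the exact $W_X$ form (requiring the real-parity QSP of Theorem~\ref{thm:qsp}) versus an approximate one (via the logarithmic block-encoding and re-synthesis of the exponential, as in Lemma~\ref{lem:log}), and state the precise sense in which the constructed operation equals the target — the iterative/recursive nature of the construction means this identification is exactly what lets the output be fed back in as a new signal unitary for a subsequent layer.
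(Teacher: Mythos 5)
Your proposal is correct and follows essentially the same route as the paper's proof: compose the commuting $W_X(\theta_s)$ rotations to obtain $W_X\bigl(\sum_{s\in q_k}\theta_s\bigr)$ at cost $\sum_s d_s$, then apply the generalized QSP theorem to the resulting signal with $a=\cos\bigl(\sum_s\theta_s/2\bigr)$, giving the $O\bigl(d\sum_s d_s\bigr)$ count. The ``main obstacle'' you flag --- that the output is a genuine $W_X$ rotation only when the completion $Q$ has the right phase structure --- is exactly the condition the paper imposes (it requires $Q=i|Q|$ with $Q$ also satisfying the GQSP assumptions), so your treatment of that point matches, and is if anything more explicit than, the paper's.
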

\begin{proof}
There are two steps needed in this construction.  First note that for all $j$ and $j'$ we have that
\begin{eqnarray}
    W_X(\theta_j) W_X(\theta_{j'}) = W_X(\theta_j + \theta_{j'}).
\end{eqnarray}
The number of rotations needed to construct each such rotation using QSP or other rotation synthesis method is assumed to be $d_j$ and $d_{j'}$ respectively.  Our claim is that if we perform a sum of such terms the cost is simply the sum of the individual costs.  Thus we can use the above observation in a base case that inductively proves our cost assumption.  Specifically, assume that for some sequence $q_{k'}$ of length $R$ that this hypothesis is true.  Then let $q_{k''} = \{q_{k'},v\}$ for $v \in \{1,\ldots,R\}$. we have from our previous observation that
\begin{equation}
\prod_{s\in q_{k''}}W_X(\theta_s) = \prod_{s\in q_{k'}}W_X(\theta_s) W_X(\theta_v)= W_X(\sum_{s\in q_{k''}} \theta_s).
\end{equation}
Thus the result holds by induction for all length strings.

Then by definition if we have such a rotation then the upper-left matrix element of $W_X(\sum_{s\in q_{k''}} \theta_s)$ is
\begin{equation}
    a = \cos(\sum_{s\in q_{k}} \theta_s/2).\label{eq:a}
\end{equation}
From the generalized QSP theorem~\cite{motlagh2023generalized}, stated as Theorem~\ref{thm:GQSP} above, we then have that for any polynomial transformation satisfying the assumptions of Thoerem~\ref{thm:GQSP} and $Q=i|Q|$ such that $Q$ also satisfies the assumptions of Theorem~\ref{thm:GQSP}, we have a resulting transformation that is also a $R_X$ rotation as required.  Thus such a transformation is achievable conditioned on $P$ and $Q$ satisfying the GQSP assumptions.  Further the angle generated is $\phi = 2\arccos(P(\cos(\sum_{s\in q_{k}} \theta_s/2)))$ from Theorem~\ref{thm:GQSP}
and~\eqref{eq:a}.
As the degree of the polynomial is at most $d$, the total number of rotations needed is given by Theorem~\ref{thm:GQSP} to be
\begin{equation}
    (d+1) +d(\sum_{s\in q_k} d_s) \in O\left(d \sum_{s\in q_k} d_s \right).
\end{equation}
as claimed.
\end{proof}

\section{Application: multivariate-QSP (M-QSP)}\label{sec:MQSP}
In this section, we describe our multivariate-QSP (M-QSP) approach in detail. Our approach differs from earlier attempts \cite{Rossi_2022, rossi2023modular} to create a multivariate version of QSP as we do not rely on the correctness of (and inherit the attendant complications from) the strong conjecture that unitary matrix completions (with minimal symmetries and constraints) provided
by the multivariate Fej\'er-Riesz theorem (FRT) can always be constructed as low-degree iterated products of unitaries. That is, we do not require that the multivariate FRT = QSP conjecture be true. Recent work \cite{németh2023variants, mori2023comment} has provided counterexamples to the multivariate FRT = QSP conjecture hence necessitating that additional conditions are required for earlier versions of M-QSP to work since the multivariate FRT = QSP conjecture by itself no longer suffices.
The central idea behind our version of M-QSP is to iteratively use QSP to provide analytically tractable methods to construct more complicated (multivariate) functions of block encoded matrices under the $(W_{Z}, S_{X}, \langle 0 | \cdot | 0 \rangle)$-QSP convention. An essential component of our approach, involves an algorithm that obliviously transforms a given signal unitary of the form $W_{Z} \left(\theta \right) \equiv e^{i \frac{\theta}{2} Z}$ into $e^{i (\frac{\theta}{2})^{2} Z}$. To this end, our general strategy will be to first determine an appropriate approximating polynomial for each desired transformation and then apply appropriate QSP phases (that generate this approximating polynomial under QSP) to a series of signal processing rotation operators interleaved alternately with an appropriate (block-encoded) input signal unitary. Our algorithm that obliviously transforms our given signal unitary $W_{Z} \left(\theta \right) \equiv e^{i \frac{\theta}{2} Z}$ into $e^{i (\frac{\theta}{2})^{2} Z}$ can then be used to generate arbitrary multivariate polynomial functions of phases as required for M-QSP.

Given an input signal unitary of the form $W_{Z} \left(\theta \right) \equiv e^{i \frac{\theta}{2} Z} = e^{i \mathcal{H}} $, where $\mathcal{H} = \frac{\theta}{2} Z$, we first utilize logarithmic block-encoding via Lemma \ref{lem:log} to obtain $\frac{\theta}{2} Z$. We note that the circuit corresponding to \eqref{eq:polyarcsin} is equivalent to implementing
\begin{equation}
\sin \left( \frac{\theta}{2} Z \right) =  \left( \langle 0 | \otimes I \right) \left( \text{H} \otimes I \right) cU^{\dagger} \left( Y \otimes I \right) cU \left( \text{H} \otimes I \right) \left( | 0 \rangle \otimes I \right) \label{sinethetaZ2}
\end{equation}
where $Y$ denotes the Pauli-Y operator and $U = e^{i \mathcal{H}}$. Thus, $ \text{sin}\left( \frac{\theta}{2} Z \right)$ can be block-encoded in a  two qubit quantum circuit corresponding to the operator product $U_{A} =\left( \text{H} \otimes I \right) cU^{\dagger} \left( Y \otimes I \right) cU \left( \text{H} \otimes I \right)$. 

We now employ the quantum eigenvalue transformation (QET) for real polynomials 
applied to our block-encoding of $ \text{sin}\left( \frac{\theta}{2} Z \right)$ to obtain a logarithmic block-encoding of $\frac{\theta}{2}Z$. By the principle of deferred measurement, we may treat $\text{sin}\left( \frac{\theta}{2} Z \right)$ as already post-selected, keeping in mind that this is predicated on a successful projective measurement on the $ | 0 \rangle \langle 0|$ index qubit as $\text{sin}\left( \frac{\theta}{2} Z \right) = \left( \langle 0 | \otimes I \right) U_{A} \left( | 0 \rangle \otimes I\right) $.
Consider an orthogonal projector $\Pi = | 0 \rangle \langle 0 |$, the unitary $U_{A}$ and $\Pi$ form a projected unitary encoding of the operator $\text{sin}\left( \frac{\theta}{2} Z \right)$ if 
\begin{equation}
\Pi  U_{A} \Pi =  \text{sin} \left( \frac{\theta}{2} Z \right).
\end{equation} 
Equivalently, we also have that 
\begin{equation}
\left( \langle 0 | \otimes I \right) U_{A} \left( | 0 \rangle \otimes I \right) = \text{sin} \left( \frac{\theta}{2} Z \right).
\end{equation} 


\begin{lemma}[Quantum eigenvalue transformation for odd polynomials] Given a block-encoding of a Hamiltonian $\mathcal{H} = \sum_{\lambda} \lambda | \lambda \rangle \langle \lambda |$ in a unitary matrix $U$ such that
\begin{equation}
\mathcal{H} = \Pi U \Pi
\end{equation}
with projector $\Pi$ locating $\mathcal{H}$ within $U$, and access to projector controlled phase shift operations $\Pi_{\phi_{j}} = e^{i \phi_{j}(2 \Pi - I)}$, then for odd $d$,
\begin{equation}
U_{\vec{\phi}} = e^{i \phi_{1} (2 \Pi - I )} U \prod_{j=1}^{(d-1)/2} \left( e^{i \phi_{2j} (2 \Pi - I )} U^{\dagger} e^{i \phi_{2j+1} (2 \Pi - I )} U \right) = \begin{bmatrix}
    {\rm Poly}\left(\mathcal{H} \right) & \cdot \\
    \cdot & \cdot
  \end{bmatrix} \label{UPhi}
\end{equation}
with
\begin{equation}
{\rm Poly}(\mathcal{H}) = \sum_{\lambda} {\rm Poly} \left(\lambda \right) | \lambda \rangle \langle \lambda |
\end{equation}
a polynomial transformation of the eigenvalues of $\mathcal{H}$. This polynomial is of degree at most $d$ and observes the conditions on polynomial $P$ from Theorem \ref{thm:qsp}.
\end{lemma}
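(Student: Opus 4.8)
The plan is to establish this by \emph{qubitization}: although the alternating sequence in \eqref{UPhi} acts on a large Hilbert space, I would show it decomposes into a direct sum of single-qubit QSP sequences, one for each eigenvalue $\lambda$ of $\mathcal{H}=\Pi U\Pi$, and then read the polynomial off blockwise from the QSP theorem (Theorem~\ref{thm:qsp}). Write $R=2\Pi-I$, so $\Pi_{\phi_j}=e^{i\phi_j R}$ and \eqref{UPhi} is the alternating word $e^{i\phi_1 R}\,U\,e^{i\phi_2 R}\,U^\dagger\,e^{i\phi_3 R}\,U\cdots$ in $U$, $U^\dagger$ and the $R$-rotations. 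The first step is to invoke Jordan's lemma for the two projectors $\Pi$ and $U\Pi U^\dagger$: this splits the space into mutually orthogonal subspaces of dimension at most two, and each such subspace is in fact $U$-invariant, since $U\Pi=(U\Pi U^\dagger)U$ maps $\mathrm{range}(\Pi)$ into $\mathrm{range}(U\Pi U^\dagger)$ while $U(I-\Pi)=(I-U\Pi U^\dagger)U$ maps the complement into the complement. On a genuinely two-dimensional block attached to $\lambda$ I would fix the orthonormal basis $\{\ket\lambda,\ket{\lambda_\perp}\}$ with $\ket\lambda$ the eigenvector of $\mathcal{H}$ inside $\mathrm{range}(\Pi)$ and $\ket{\lambda_\perp}\propto(I-\Pi)U\ket\lambda$; one-dimensional blocks (where $\Pi$ restricts to $0$ or $I$, forcing $\lambda=0$ or $\pm1$) contribute only a phase and are handled trivially.

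In this basis $R$ acts as Pauli $Z$, and $\Pi U\Pi\ket\lambda=\lambda\ket\lambda$ forces $\langle\lambda|U|\lambda\rangle=\lambda$, hence $U\ket\lambda=\lambda\ket\lambda+\sqrt{1-\lambda^2}\,\ket{\lambda_\perp}$; unitarity pins the first column of $U$ to $(\lambda,\sqrt{1-\lambda^2})^{T}$, so $U$ acts as a signal matrix $\mqty[\lambda & \cdot\\ \sqrt{1-\lambda^2} & \cdot]$ of the type appearing in Theorem~\ref{thm:qsp} (up to a fixed single-qubit change of convention), with $U^\dagger$ its inverse. Substituting these blockwise forms into \eqref{UPhi}, the restricted operator is exactly a degree-$d$ QSP word in the $(W_X,S_Z,\bra{+}\cdot\ket{+})$ convention of Definition~\ref{def:qsp conventions}: the leading $U$ together with the $(d-1)/2$ pairs $U^\dagger(\cdot)U$ supply $d$ uses of the signal, interleaved with the $d$ rotations $e^{i\phi_j Z}$.

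By Theorem~\ref{thm:qsp}, the $\bra\lambda\cdot\ket\lambda$ matrix element of this block word is ${\rm Poly}(\lambda)$ for a single real polynomial ${\rm Poly}$ — the same in every block, since its coefficients depend only on $\vec\phi$ — of degree at most $d$, of parity $d\bmod 2$ (odd, because $d$ is odd), and bounded by $1$ in magnitude on $[-1,1]$; conversely every such polynomial is achievable. Because $\mathcal{H}$ is Hermitian with spectral norm at most one, every $\lambda\in[-1,1]$, so summing over blocks gives $\Pi U_{\vec\phi}\Pi=\sum_\lambda{\rm Poly}(\lambda)\ketbra{\lambda}{\lambda}={\rm Poly}(\mathcal{H})$, which is the claimed upper-left block of \eqref{UPhi}.

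The step I expect to be the main obstacle is the qubitization itself: carefully checking that the two-dimensional subspaces are $U$-invariant (not merely $R$-invariant) and that any residual block-dependent phase in the second column of $U$ cannot leak into the $\bra\lambda\cdot\ket\lambda$ entry — it does not, because $\ket\lambda$ is a $Z$-eigenvector, and it is absent altogether when $U$ is a reflection, the case relevant here since the $U_A=(\mathrm{H}\otimes I)cU^\dagger(Y\otimes I)cU(\mathrm{H}\otimes I)$ of Section~\ref{sec:MQSP} satisfies $U_A^\dagger=U_A$. The odd-parity hypothesis on $d$ is precisely what upgrades this from a singular-value transformation to an eigenvalue transformation and lets the blocks for $\lambda$ and $-\lambda$ glue consistently; tracking that, together with the exact degree and parity bookkeeping for the $(d-1)/2$ conjugated pairs, is where the write-up needs the most care, but it is all standard once the correspondence with Theorem~\ref{thm:qsp} is in place.
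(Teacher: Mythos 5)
The paper states this lemma without proof, citing it implicitly as the standard quantum eigenvalue transformation result (Theorem 17 of Gily\'en et al.\ and its Hermitian specialization), so your qubitization strategy is aimed at exactly the right target; the issue is that the key technical claim does not hold as written. The two-dimensional subspaces you build from Jordan's lemma for the pair $(\Pi,\tilde\Pi)$ with $\tilde\Pi:=U\Pi U^{\dagger}$ are \emph{not} $U$-invariant in general. The identities $U\Pi=\tilde\Pi U$ and $U(I-\Pi)=(I-\tilde\Pi)U$ only show that $U$ carries $\mathrm{range}(\Pi)$ onto $\mathrm{range}(\tilde\Pi)$; they cannot give block invariance, because conjugation by $U$ sends the pair $(\Pi,\tilde\Pi)$ to the \emph{different} pair $(\tilde\Pi,U\tilde\Pi U^{\dagger})$. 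Concretely, let $U$ be the cyclic shift on $\mathbb{C}^{4}$ and $\Pi=\ketbra{0}{0}$: then $\mathcal{H}=0$ is Hermitian, your block attached to $\lambda=0$ is $\mathrm{span}\{\ket{0},\ket{1}\}$, yet $U\ket{1}=\ket{2}$ leaves it. Restricting $U$ to such a block and ``pinning the first column by unitarity'' is therefore illegitimate: the restriction need not even be an endomorphism of the block, so the subsequent identification with a $2\times 2$ signal matrix of Theorem~\ref{thm:qsp} does not follow.

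The standard repair is to work with \emph{two} bases rather than one invariant decomposition: either invoke the CS decomposition, which block-diagonalizes $U$ from the right-singular-vector blocks of $\mathcal{H}=\Pi U\Pi$ to the left-singular-vector blocks, or equivalently rewrite the interior of the word \eqref{UPhi} using $U\,e^{i\phi(2\Pi-I)}\,U^{\dagger}=e^{i\phi(2\tilde\Pi-I)}$, so that everything except a single trailing $U$ is a product of rotations generated by $\Pi$ and $\tilde\Pi$, which genuinely do preserve the Jordan blocks; one then tracks how the final $U$ maps right blocks to left blocks. For Hermitian $\mathcal{H}$ the two singular-vector bases agree up to a sign on negative eigenvalues, which is precisely where your (correctly anticipated) odd-parity hypothesis enters. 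Your remaining bookkeeping --- one polynomial common to all blocks, degree and parity, the off-by-one phase count relative to the $(d+1)$-phase convention of Theorem~\ref{thm:qsp} --- is fine, and your observation that the $U_{A}$ actually used later in the paper is a reflection (for which your invariance claim does hold, since then $U^{2}=I$) means the gap is harmless for that application; but the lemma is stated for general $U$ and needs the two-basis argument.
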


\begin{lemma}\label{lem:square}
Assume that we have a signal unitary $W_Z(\theta) = e^{i \frac{\theta}{2} Z} \in \mathbb{C}^{2\times 2}$ and let $\epsilon' >0$ be an error tolerance.  We can then construct a $(1+\epsilon',6,\epsilon')$-block encoding of $\exp(i(\frac{\theta}{2})^2 Z)$ using a number of queries to $W_Z(\theta)$ and additional single qubit rotations that scales as
$$
\mathcal{O}\left(\frac{\log^2(1/\epsilon')}{\log\log(1/\epsilon')} \right)
$$
\end{lemma}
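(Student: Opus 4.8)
The plan is to avoid trying to produce $e^{i(\theta/2)^2Z}$ in a single QSP step, since no such step exists: any QSP/QSVT transformation of the signal $W_Z(\theta)=e^{i\frac{\theta}{2}Z}$ acts on the $\ket{0}$- and $\ket{1}$-eigenspaces through the images of $\theta/2$ and $-\theta/2$ under one fixed function, whereas $e^{i(\theta/2)^2Z}=\cos((\theta/2)^2)\,I+i\sin((\theta/2)^2)\,Z$ requires the $\ket{0}$ and $\ket{1}$ outputs to be complex conjugates of a (nonconstant) quantity depending only on $(\theta/2)^2$. So I would instead first distill the even, real scalar $(\theta/2)^2$ as a multiple of the identity, and only afterwards reintroduce the $Z$-structure and re-exponentiate.

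Step 1: invoke Lemma~\ref{lem:log} with $\mathcal H=\frac{\theta}{2}Z$ (assuming $\norm{\frac{\theta}{2}Z}\le\frac12$, the precondition of that lemma) and tolerance $\epsilon_1$, obtaining with $\mathcal O(\log(1/\epsilon_1))$ controlled uses of $W_Z(\theta)$ and its inverse, and $\mathcal O(1)$ ancillas, a Hermitian block-encoding of $\frac{2}{\pi}\cdot\frac{\theta}{2}Z=\frac{\theta}{\pi}Z$ with error $\mathcal O(\epsilon_1)$. Step 2: apply the quantum eigenvalue transformation to this block-encoding with the even degree-$2$ polynomial $p(x)=x^2$ ($\abs{p}\le1$ on $[-1,1]$); since $p$ is even, $p(\frac{\theta}{\pi}Z)=\frac{\theta^2}{\pi^2}I$, so I obtain a unitary $U_2$ that block-encodes the scalar operator $\frac{\theta^2}{\pi^2}I$, at the cost of two more copies of the Step-1 encoding (hence $\mathcal O(\log(1/\epsilon_1))$ queries) and error $\mathcal O(\epsilon_1)$. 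One could equally fold Steps 1--2 into a single even-polynomial QET of the $\sin(\frac{\theta}{2}Z)$ block produced inside Lemma~\ref{lem:log}, approximating $y\mapsto\frac{4}{\pi^2}(\arcsin y)^2$ on the relevant subinterval of $(-1,1)$, again of degree $\mathcal O(\log(1/\epsilon_1))$.

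Step 3: left-multiply $U_2$ by the Pauli-$Z$ on the target qubit; since this commutes with the ancilla projector, the result block-encodes $\frac{\theta^2}{\pi^2}Z$, whose eigenvalues $\pm\frac{\theta^2}{\pi^2}$ now sit on $\ket{0}$ and $\ket{1}$ respectively --- the sign is now carried by the branch of $Z$ rather than by $\theta$, which is exactly what re-exponentiation needs. Step 4: run QSP-based Hamiltonian simulation~\cite{low2019hamiltonian,gilyen2019quantum} on this block-encoding of $H:=\frac{\theta^2}{\pi^2}Z$ for time $t=\frac{\pi^2}{4}$, yielding $e^{itH}=e^{i(\theta/2)^2Z}$; the polynomial $\delta$-approximating $x\mapsto e^{itx}$ on $[-1,1]$ has degree $\mathcal O(|t|+\log(1/\delta)/\log\log(1/\delta))=\mathcal O(\log(1/\delta)/\log\log(1/\delta))$, uses that many copies of the block-encoding of $H$ together with $\mathcal O(1)$ further ancillas, and produces a $(1+\mathcal O(\delta),\mathcal O(1),\mathcal O(\delta+\epsilon_1))$-block-encoding of $e^{i(\theta/2)^2Z}$, the $\epsilon_1$ propagating via $\norm{e^{itH}-e^{itH'}}\le|t|\,\norm{H-H'}$. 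Setting $\epsilon_1=\delta=\Theta(\epsilon')$ then gives total query count $\mathcal O(\log(1/\epsilon')/\log\log(1/\epsilon'))\cdot\mathcal O(\log(1/\epsilon'))=\mathcal O(\log^2(1/\epsilon')/\log\log(1/\epsilon'))$, error $\mathcal O(\epsilon')$, subnormalization $1+\mathcal O(\epsilon')$, and a small constant number $\le 6$ of ancillas; rescaling $\epsilon'$ by a constant yields the stated $(1+\epsilon',6,\epsilon')$-block-encoding.

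I expect the only real obstacle to be the structural observation behind Steps 2--3: recognizing that the parity/conjugation obstruction forbids any one-shot construction, and that the remedy is to pass through an identity-proportional block and then attach a \emph{fresh} $Z$ before exponentiating. Once that is in place the rest is routine resource bookkeeping, and the $\log^2/\log\log$ scaling is simply the product of the $\mathcal O(\log)$ cost of logarithmic block-encoding with the $\mathcal O(\log/\log\log)$ cost of Hamiltonian simulation for $\mathcal O(1)$ evolution time.
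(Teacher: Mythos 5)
Your proposal is correct and follows essentially the same route as the paper: logarithmic block-encoding of $\frac{\theta}{2}Z$ via an arcsine QET on $\sin(\frac{\theta}{2}Z)$, squaring (the paper multiplies two copies of the block-encoding where you phrase it as a degree-$2$ QET, which is the same operation), attaching a fresh Pauli-$Z$ to undo the $Z^2=I$ trivialization, and re-exponentiating via Jacobi--Anger/Hamiltonian simulation, with the identical $\mathcal{O}(\log(1/\epsilon'))\times\mathcal{O}(\log(1/\epsilon')/\log\log(1/\epsilon'))$ cost accounting. The structural observation you flag as the key obstacle is exactly the step the paper handles by inserting the extra block-encoded $Z$.
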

\begin{proof}
By interleaving $U_{A}$ and $U_{A}^{\dagger}$ with projector controlled phase shift operations $\Pi_{\phi_{j}} = e^{i \phi_{j}(2 \Pi - I)}$ corresponding to a vector of phases $\vec{\phi} \in \mathbb{R}^{d+1} $ (where d is ${\mathcal{O}\left( \log(\frac{1}{\epsilon})\right)}$), we thus see that QET performs an oblivious polynomial transformation $P$ corresponding to $\frac{2}{\pi}\arcsin(x)$ of the eigenvalues of $U_{A}$ via the unitary $U_{\vec{\phi}}$  defined for odd polynomials of degree $d$. Phase angles generating a particular $P$ exist and can be efficiently calculated on a classical computer for our approximating polynomial $P$ in time $ \mathcal{O} ({\rm poly}(d, \log(1/ \epsilon)))$, if the degree of $P$ and the associated polynomial $Q$ (related to a matrix completion given $P$) are less than $d$ and $d-1$ respectively (in addition to satisfying the corresponding parity and unitarity constraints).

By considering a tuple of negative phases $ - \vec{\phi}$, QET then gives (in analogy with Corollary 18 in \cite{Gily_n_2019}) that
\begin{eqnarray}
P^{(QET)} \left( \Pi  U_{A}  \Pi \right) /2  &=& \left( \langle + | \otimes \Pi \right) \left( | 0 \rangle \langle 0 | \otimes U_{\vec{\phi}} \right) \left( | + \rangle \otimes \Pi \right) \\
P^{*(QET)} \left( \Pi  U_{A}  \Pi \right) /2  &=& \left( \langle + | \otimes \Pi \right) \left( | 1 \rangle \langle 1 | \otimes U_{-\vec{\phi}} \right) \left( | + \rangle \otimes \Pi \right)
\end{eqnarray}
where from (\ref{UPhi}) we observe that a vector of negative phases $-\vec{\phi}$, gives the original polynomial $P^{(QET)}$ instead with complex conjugated coefficients as $P^{*(QET)}$. Combining these gives the real polynomial $P_{R}^{(QET)}$ where
\begin{eqnarray}
P_{R}^{(QET)} \left( \Pi  U_{A}  \Pi \right) &=& P^{(QET)} \left( \Pi  U_{A} \Pi \right) /2 + P^{*(QET)} \left( \Pi  U_{A} \Pi \right) /2 \\ &=& ( \langle + | \otimes \Pi ) \left( |0 \rangle \langle 0| \otimes U_{\vec{\phi}} + |1 \rangle \langle 1| \otimes U_{-\vec{\phi}}\right) \left( | + \rangle \otimes \Pi \right).
\end{eqnarray}
Putting this all together, for the polynomial $P$ (which approximates $\frac{2}{\pi} \text{arcsin(x)}$), we have from Theorem 17 in \cite{Gily_n_2019} that 
\begin{eqnarray}
P^{(QET)} \left( \Pi U_{A} \Pi \right) &=& \Pi U_{\vec{\phi}} \Pi \\
P^{(*QET)} \left( \Pi U_{A} \Pi \right) &=& \Pi U_{-\vec{\phi}} \Pi
\end{eqnarray}
and thus
\begin{equation}
\norm\bigg{ \frac{\theta}{2}Z - \frac{\pi}{2} \left( \langle 0 |^{\otimes 2} \otimes I \right) \left( |0 \rangle \langle 0| \otimes U_{\vec{\phi}} + |1 \rangle \langle 1| \otimes U_{-\vec{\phi}} \right) \left( | 0 \rangle^{\otimes 2} \otimes I\right) } \leq \epsilon
\end{equation}
such that $U_{\Phi}$ is a $ \left( \frac{\pi}{2}, 2 ,\epsilon \right)$-logarithmic block-encoding of $\frac{\theta}{2}Z$ as desired. The QET step requires $\mathcal{O} (\log(\frac{1}{\epsilon}))$ queries to the block-encoding of $\sin \left(\frac{\theta}{2} Z \right)$ and this number of queries corresponds to the degree of the approximating polynomial $P$ for an error tolerance of $\epsilon$, thus confirming the $\mathcal{O} (\log(\frac{1}{\epsilon}))$ uses of controlled-$U$ and its inverse.

We next perform phase squaring and exponentiation to obtain ${e^{i(\frac{\theta}{2})^{2} Z}}$ via the following steps.  
Using $U_{\vec{\phi}}$, our $ \left( \frac{\pi}{2}, 2 ,\epsilon \right) $-logarithmic block-encoding of $\frac{\theta}{2} Z $, we then see that  $(I \otimes U_{\vec{\phi}}) ( I \otimes U_{\vec{\phi}} )$ gives a $\left( (\frac{\pi}{2})^{2}, 4, \pi \epsilon \right) $-block-encoding of $ ( \frac{\theta}{2}Z )^2$. In order to ensure that this transformed block-encoded operator does not act trivially (in projection) since $Z^{2} = I$, we 'multiply' an additional block-encoded Pauli-Z operator, so we require the following product of block-encodings $U' = ( I \otimes U_{\vec{\phi}}) ( I \otimes U_{\vec{\phi}} ) ( I \otimes I \otimes Z )$ which thus gives a $\left( (\frac{\pi}{2})^{2}, 6, \pi \epsilon \right)$-block-encoding of $(\frac{\theta}{2})^{2} Z$ where
\begin{equation}
\norm\bigg{ \left( \frac{\theta}{2} \right)^{2}Z - \left( \frac{\pi}{2} \right)^{2} \left( \langle 0|^{\otimes 2} \otimes I \right) U_{\vec{\phi}} \left( |0 \rangle ^{\otimes 2} \otimes I \right) \left( \langle 0|^{\otimes 2} \otimes I \right) U_{\vec{\phi}} \left( |0 \rangle ^{\otimes 2} \otimes I \right) \left( \langle 0|^{\otimes 2} \otimes I \right) I \otimes I \otimes Z \left( |0 \rangle^{\otimes 2} \otimes I \right) } \leq \pi \epsilon.
\end{equation}

By setting $\epsilon' = \epsilon/ \pi$, we thus obtain a $\left( (\frac{\pi}{2})^{2}, 6, \epsilon' \right)$-block-encoding of $(\frac{\theta}{2})^{2} Z$.  Using this block-encoding of $(\frac{\theta}{2})^{2} Z$, we now again utilize QET to obtain $e^{i(\frac{\theta}{2})^{2} Z} $. Since the exponential function $e^{ix}$ does not have a definite parity, we will need to apply GQSP to produce
\begin{equation}
\cos^{(QET)} \left( \left(\frac{\theta}{2} \right)^{2}Z  \right) + i \sin^{(QET)}\left( \left(\frac{\theta}{2}\right)^{2} Z \right) =  e^{i(\frac{\theta}{2})^{2} Z} \label{cos_sine_sum}
\end{equation}
so as to correctly generate our target $e^{i(\frac{\theta}{2})^{2} Z}$.  This requires an additional ancillary qubit to implement the polynomial transformation and we still need to obtain the required approximating polynomial transformations.

QET approximating polynomials for cosine and sine can be obtained using the Jacobi-Anger expansions for $\cos(xt)$ and $\sin(xt)$
\begin{eqnarray}
\cos(xt) &=& \textit{J}_{0}(t) + 2 \sum_{k=1}^{\infty} (-1)^{k} \textit{J}_{2k}(t) \textit{T}_{2k}(x) \\
\sin(xt) &=& 2 \sum_{k=1}^{\infty} (-1)^{k} \textit{J}_{2k+1}(t) \textit{T}_{2k+1}(x)
\end{eqnarray}
where $\textit{J}_{i}(t)$ is the $i$-th order Bessel function of the first kind and $\textit{T}_{i}$ is the $i$-th order Chebychev polynomial of the first kind. From Hamiltonian simulation by QSP \cite{Gily_n_2019} \cite{Low_2019} \cite{Low_2017}, we know that to obtain $\epsilon'$-approximations for $cos(xt)$ and $sin(xt)$ using these expansions, we will need to truncate these expansions at sufficiently large degrees of $2R$ and $2R+1$ respectively, where $R = \left\lfloor \frac{1}{2} r \left( \frac{e|t|}{2}, \frac{5}{4} \epsilon' \right)\right\rfloor $, $t \in \mathbb{R} \backslash \{0\}$ and $\epsilon' \in (0, \frac{1}{e})$. The function $r(t, \epsilon')$ is defined implicitly via
\begin{equation}
\epsilon = \left( \frac{t}{r} \right)^{r}
\end{equation}
and scales asymptotically as
\begin{equation}
r(t, \epsilon') = \Theta \left( |t| + \frac{\log (1/\epsilon')}{\log \left(e + \frac{\log(1/\epsilon')}{|t|} \right)} \right).
\end{equation}
We can thus reframe our approach to obtaining a block-encoding of $e^{i(\frac{\theta}{2})^{2}Z} $ as a Hamiltonian simulation of $\mathcal{H}$ for a time of $t$, where $\mathcal{H} =   (\frac{\theta}{2})^{2} Z $ and $t=-1$. To use QSP, the even and odd approximating polynomials $P^{\cos}_{\epsilon'} (x,t)$ and $P^{\sin}_{\epsilon'} (x,t)$ need to be re-scaled so as to be correctly bounded by $1$ in absolute value. To ensure an $\epsilon'$-approximation to $e^{ixt}$, we re-scale both these polynomials by $\frac{1}{1+ \epsilon' / 4}$ thus giving $ \left| \frac{1}{1+ \epsilon' / 4} P^{cos}_{\epsilon'} (x,t) - \cos(xt) \right| \leq \frac{\epsilon'}{2} $ and $ \left| \frac{1}{1+ \epsilon' / 4} P^{\sin}_{\epsilon'} (x,t) - \sin(xt) \right| \leq \frac{\epsilon'}{2} $ as required. QSP in this exponentiation step requires $2R$ queries for the even polynomial and $2R +1$ queries for the odd polynomial to our modified $U'$ (our block-encoding of the appropriately scaled $\mathcal{H'}$) giving a total query count of
\begin{equation}
2R +2R + 1 = \Theta \left( 1+ \frac{\log (1/\epsilon')}{\log \left( e + \frac{1}{2}\log(1/\epsilon') \right)} \right).
\end{equation}
This scaling has been demonstrated for Hamiltonian simulation \cite{Low_2017} \cite{Gily_n_2019} to be optimal in $\epsilon'$ and is similarly optimal for our construction here. As before, the required QSP phase angles can be efficiently calculated for our approximating polynomials $P^{\cos}_{\epsilon'} (x,t)$ and $P^{\sin}_{\epsilon'} (x,t)$ on a classical computer in time $ \mathcal{O} (\text{poly}(n, \log(1/ \epsilon')))$ where $n$ is the degree of the approximating polynomial. Applying these phase angles via QSP to obtain $P^{\cos}_{\epsilon'} ((\frac{\theta}{2})^{2} Z)$ and $P^{\sin}_{\epsilon'} ((\frac{\theta}{2})^{2} Z)$, these two parts can be combined after a re-scaling using a linear combination of their unitary block-encodings to give an $\epsilon'$-approximation of our target $e^{i(\frac{\theta}{2})^{2} Z}$ as required.  

The result of Theorem~\ref{thm:iteratedQSP} shows that the iterated block-encodings of the two processes is the product of the two and so the overall cost is $\mathcal{O}(\log^2(1/\epsilon')/\log\log(1/\epsilon'))$ as claimed.
\end{proof}
This shows that we can construct a phase multiplier using a constant number of ancillary qubits.  Further the cost of the multiplier is actually competitive with the elementary multiplication algorithm which scales as $\mathcal{O}(\log^2(1/\epsilon))$ but requires more than a constant number of qubits and does not apply when the input is provided as phases.
Our goal of M-QSP is to generate multivariate polynomial transformations of the eigenphases of a given set of (univariate) signal unitaries. To achieve these multivariate polynomial transformations, we combine our above phase squaring algorithm with iterated use of QSP. The central idea behind our work is to iteratively use QSP to provide analytically tractable methods to construct complicated functions of block encoded matrices via the use of suitable approximating polynomials.

Suppose we are given oracle access to a set of $m$ signal unitaries $\lbrace U_{k} = e^{i \frac{\theta_{k}}{2} Z} | k = 1,2, \dots , m \rbrace $ and their inverses. As before, we are working in the $(W_{Z}, S_{X}, \langle 0 | \cdot | 0 \rangle)$-QSP convention. Our goal is to construct a unitary with a phase corresponding to a multivariate polynomial function of the signal unitaries' phases.  
A general multivariate polynomial of univariate phases $ P(\theta_{1}, \theta_{2}, \dots, \theta_m{}) $ has two distinct types of terms:
\begin{enumerate}
\item powers of univariate phases $(\frac{\theta_{k}}{2})^{l}$
\item powers of multivariate phases $\Pi_{k=1}^{m} (\frac{\theta_{k}}{2})^{l_{k}}$.
\end{enumerate}
We next discuss how to construct each of these in turn.

\begin{corollary}\label{cor:PUP}
For any $l'\in \mathbb{Z}_+$ we can construct a $(1+\epsilon',2l', \epsilon' )$-block encoding of $\exp(i(\frac{\theta}{2})^{l'} Z)$ using
a number of applications of the $W_Z(\theta)$ and single qubit unitaries that scale as
$$
\mathcal{O}\left(\frac{(\frac{\pi}{2})^{l'}\log^2(l'/\epsilon')}{\log\log(l'/\epsilon')} \right)
$$
\end{corollary}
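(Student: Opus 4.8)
\emph{Proof proposal.}\quad The plan is to run the three-stage construction of Lemma~\ref{lem:square} (which is exactly the case $l'=2$) with the single change that the logarithmic block-encoding of $\tfrac{\theta}{2}Z$ is multiplied into itself $l'$ times rather than twice. \emph{Stage~1:} apply Lemma~\ref{lem:log} to $W_Z(\theta)=e^{i\frac{\theta}{2}Z}=e^{i\mathcal H}$, $\mathcal H=\tfrac{\theta}{2}Z$ (so $\lVert\mathcal H\rVert\le\tfrac12$), to obtain a $(\tfrac{\pi}{2},2,\epsilon)$-logarithmic block-encoding $U_{\vec\phi}$ of $\mathcal H$ using $\mathcal O(\log(1/\epsilon))$ controlled applications of $W_Z(\theta)$ and its inverse, leaving the accuracy $\epsilon$ as a parameter to be fixed at the end. \emph{Stage~2:} form the product of $l'$ independent copies of $U_{\vec\phi}$, each acting on its own two-qubit ancilla register, so that the projected product is the product of the projected operators. \emph{Stage~3:} exponentiate the resulting block-encoding using GQSP (Theorem~\ref{thm:GQSP}).

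For Stage~2, I would iterate the product-of-block-encodings estimate already used for $l'=2$ --- composing an $(\alpha,a,\delta)$-encoding with a $(\beta,b,\delta')$-encoding yields an $(\alpha\beta,\,a+b,\,\alpha\delta'+\beta\delta)$-encoding --- to conclude that the $l'$-fold product is a $\bigl((\tfrac{\pi}{2})^{l'},\,2l',\,\mathcal O(l'(\tfrac{\pi}{2})^{l'-1}\epsilon)\bigr)$-block-encoding of $(\tfrac{\theta}{2}Z)^{l'}=(\tfrac{\theta}{2})^{l'}Z^{l'}$. When $l'$ is even, $Z^{l'}=I$ and the encoded operator would act trivially in projection, so --- exactly as in Lemma~\ref{lem:square} --- I would append one exact (ancilla-free) block-encoding of the Pauli $Z$; for odd $l'$, $Z^{l'}=Z$ already. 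In both cases one is left with a $\bigl((\tfrac{\pi}{2})^{l'},\,2l',\,\mathcal O(l'(\tfrac{\pi}{2})^{l'-1}\epsilon)\bigr)$-block-encoding $U'$ of $(\tfrac{\theta}{2})^{l'}Z$, which is legitimate since the normalized operator $(\tfrac{\theta}{2})^{l'}Z/(\tfrac{\pi}{2})^{l'}$ has spectral norm at most $\pi^{-l'}<1$.

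Stage~3 is verbatim Lemma~\ref{lem:square}: read $e^{i(\frac{\theta}{2})^{l'}Z}$ as Hamiltonian simulation of the normalized block operator of $U'$ for time $|t|=(\tfrac{\pi}{2})^{l'}$, apply GQSP to the even (cosine) and odd (sine) Jacobi-Anger truncations rescaled by $1/(1+\epsilon'/4)$, and combine them by an LCU using one extra ancilla, at a cost of $\Theta\bigl(|t|+\log(1/\epsilon')/\log\log(1/\epsilon')\bigr)$ queries to $U'$ with classically efficiently computable phases. Finally, fix the error budget: choosing $\epsilon$ so that the Stage-2 product error, the GQSP rescaling error, and the Jacobi-Anger truncation error are each $\mathcal O(\epsilon')$ forces $1/\epsilon=\mathrm{poly}\bigl(l',(\tfrac{\pi}{2})^{l'},1/\epsilon'\bigr)$, hence $\log(1/\epsilon)=\mathcal O(\log(l'/\epsilon'))$ once the $\Theta(l')$ term $l'\log(\tfrac{\pi}{2})$ is absorbed (or removed using the sharper product bound exploiting $\lVert\tfrac{\theta}{2}Z\rVert\le\tfrac12$), and yields a $(1+\epsilon',2l',\epsilon')$-block-encoding of $e^{i(\frac{\theta}{2})^{l'}Z}$. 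Composing the stages as in Theorem~\ref{thm:iteratedQSP} --- each of the $\Theta\bigl((\tfrac{\pi}{2})^{l'}+\log(l'/\epsilon')/\log\log(l'/\epsilon')\bigr)$ queries to $U'$ costs $\mathcal O(l')$ queries to $U_{\vec\phi}$, each of which costs $\mathcal O(\log(l'/\epsilon'))$ queries to $W_Z(\theta)$ --- and keeping the dominant contribution gives the claimed
\[\mathcal O\!\left(\frac{(\tfrac{\pi}{2})^{l'}\log^{2}(l'/\epsilon')}{\log\log(l'/\epsilon')}\right).\]

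\emph{Main obstacle.} The only genuinely new point beyond Lemma~\ref{lem:square} is the bookkeeping of the subnormalization: taking $l'$ products accumulates a factor $(\tfrac{\pi}{2})^{l'}$ that simultaneously (i) sets the Hamiltonian-simulation time in Stage~3 --- producing the explicit $(\tfrac{\pi}{2})^{l'}$ prefactor, which makes the method attractive only for modest $l'$ --- and (ii) enters the precision $\epsilon$ demanded in Stage~2. The work is to verify that (ii) contributes only a logarithm, foldable into $\log(l'/\epsilon')$, and that the remaining $\mathrm{poly}(l')$ overheads in the query count are subdominant, so the asymptotics are not degraded. Everything else --- the logarithmic block-encoding, the product lemma, the Jacobi-Anger/GQSP exponentiation, and the classical computability of the phases --- transfers unchanged.
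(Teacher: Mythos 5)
Your proposal follows essentially the same route as the paper: obtain the $(\tfrac{\pi}{2},2,\epsilon)$-logarithmic block-encoding of $\tfrac{\theta}{2}Z$ via Lemma~\ref{lem:log}, take the $l'$-fold product to get a $((\tfrac{\pi}{2})^{l'},\mathcal{O}(l'),\mathcal{O}(l'\epsilon))$-encoding of $(\tfrac{\theta}{2})^{l'}Z^{l'}$, append a block-encoded $Z$ only when $l'$ is even, and then exponentiate via the Jacobi--Anger/GQSP step of Lemma~\ref{lem:square} with effective simulation time set by the subnormalization $(\tfrac{\pi}{2})^{l'}$. Your error bookkeeping in Stage~2 is in fact slightly more explicit than the paper's, but the construction and the resulting cost are the same.
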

\begin{proof}
For powers of univariate phases, we can immediately adapt our phase squaring and exponentiation algorithm detailed previously to generate any univariate power we require. In this instance, there are two different cases, even and odd powers of the univariate phase. 

For the case of even powers $l$  of a univariate phase, $(\frac{\theta_{k}}{2})^{l}$, using $U_{\vec{\phi},k}$, from Lemma \ref{lem:log} and our earlier discussion of logarithmic block-encoding, the $ \left( \frac{\pi}{2}, 2 ,\epsilon \right) $-logarithmic block-encoding of $\frac{\theta_{k}}{2} Z $, $(I \otimes U_{\vec{\phi},k})^{l}$ gives a $\left( (\frac{\pi}{2})^{l}, 2l, \frac{\pi \epsilon}{2} l \right) $-block-encoding of $ ( \frac{\theta_{k}}{2}Z )^l$. In order to ensure that this transformed block-encoded operator does not act trivially (in projection) since $Z^{2} = I$, we `multiply' an additional block-encoded Pauli-Z operator, so $U' = ( I \otimes U_{\vec{\phi},k})^{l} ( I^{\otimes l} \otimes Z )$. This gives a $\left( (\frac{\pi}{2})^{l}, 2l+1, \frac{\pi \epsilon}{2} l \right) $-block-encoding of $(\frac{\theta}{2})^{l} Z$, which we can exponentiate to $e^{i(\frac{\theta}{2})^{l} Z}$ following the remainder of the phase squaring algorithm as before. Note that any arbitrary coefficient $a \in \mathbb{R}$ can also be appended to the phase by setting $t = a$  in the phase exponentiation step to generate $e^{i \: a(\frac{\theta}{2})^{l} Z}$ with the correct phase coefficient as required. For an odd univariate power $l'$, we proceed as before but we do not need to multiply an extra $( I^{\otimes l'} \otimes Z )$ because $Z^{l'} = Z$ for odd $l'$. The rest of the algorithm remains identical and we obtain a $\left( (\frac{\pi}{2})^{l'}, 2l, \frac{\pi \epsilon}{2} l' \right) $-block-encoding of $(\frac{\theta}{2})^{l'} Z$, which we can then exponentiate to $e^{i(\frac{\theta}{2})^{l'} Z}$ via the exponentiation step of the phase squaring algorithm as before.  This leads to a total cost of $\mathcal{O}((\pi/2)^{l'}\log^2(\epsilon' l')/\log\log(\epsilon' l'))$.
\end{proof}

\subsection{Squaring construction for small $\theta$}
The previous construction for squaring circuits used a conventional approach to block encode the rotation angle of an input $Z$-rotation.  This construction works well for large values of $\theta$, but the block-encoding constant involved makes it less efficient for small inputs.  In addition, the construction of the block encoding leads to constant overheads in the number of ancillae.  Here we provide a different approach that uses a polynomial approximation to block-encode the rotation angle and ultimately square it that obviates these difficulties at the price of restricting the input rotations such that $|\theta|\le 1/3$.  

\begin{lemma}\label{lem:JacobiAngerSq} For any $a\in \mathbb{R}$ we have that
    $$
e^{ia^2}=J_0(a)+2\sum_{n> 0}^\infty (i)^n J_n(a)T_n(a)
    $$
and for any $\epsilon>0$ there exists $M^*\in \mathbb{Z}$ with $M^* \in \mathcal{O}(\log(1/\epsilon)/\log\log(1/\epsilon))$ such that for any integer $M>M^*$ 
$$
|e^{ia^2} - J_0(a)+2\sum_{n> 0}^M (i)^n J_n(a)T_n(a)| \le \epsilon
$$
\end{lemma}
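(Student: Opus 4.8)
The plan is to recognize this as a direct instance of the Jacobi--Anger expansion evaluated at a shifted argument. Recall the standard Jacobi--Anger identity $e^{iz\cos\phi} = J_0(z) + 2\sum_{n\ge 1} i^n J_n(z) T_n(\cos\phi)$, or equivalently, writing $x = \cos\phi$, $e^{izx} = J_0(z) + 2\sum_{n\ge 1} i^n J_n(z) T_n(x)$ for $x\in[-1,1]$. First I would set $z = a$ and $x = a$ simultaneously; since for $|a|\le 1$ (which is the regime of interest, given the restriction $|\theta|\le 1/3$ in the surrounding construction, hence $|a|\le 1$) the point $x=a$ lies in $[-1,1]$ where the Chebyshev expansion converges, this substitution is legitimate and yields exactly $e^{ia^2} = J_0(a) + 2\sum_{n> 0} i^n J_n(a) T_n(a)$. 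This establishes the first (exact) identity. I would remark that the series converges absolutely and uniformly on compact sets because $J_n(a)$ decays superexponentially in $n$ while $|T_n(a)|\le 1$ there.

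Second, for the truncation bound I would estimate the tail $\left| 2\sum_{n>M} i^n J_n(a) T_n(a) \right| \le 2\sum_{n>M} |J_n(a)|$, using $|T_n(a)|\le 1$ for $|a|\le 1$. The key quantitative input is the standard bound on Bessel functions of large order: $|J_n(a)| \le \frac{1}{n!}\left(\frac{|a|}{2}\right)^n \le \frac{1}{n!}\left(\frac{1}{2}\right)^n$ for $|a|\le 1$. Hence the tail is at most $2\sum_{n>M} \frac{1}{n!}2^{-n}$, a rapidly converging series whose tail from $M+1$ is $O\!\left(\frac{1}{(M+1)!}\right)$ (up to a constant, dominated by the first term). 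To make this at most $\epsilon$ it suffices that $(M+1)! \gtrsim 1/\epsilon$; by Stirling, $\log((M+1)!) = \Theta(M\log M)$, so solving $M\log M \gtrsim \log(1/\epsilon)$ gives $M^* = \Theta(\log(1/\epsilon)/\log\log(1/\epsilon))$, which is the claimed scaling. Any integer $M > M^*$ then makes the truncation error at most $\epsilon$.

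The only genuinely delicate point — really the one place one must be careful rather than the one place the argument is hard — is justifying that plugging $z=a$ into the Jacobi--Anger series is valid precisely when $|a|\le 1$, since the Chebyshev-series identity $e^{izx}=\sum_n \cdots T_n(x)$ requires $x\in[-1,1]$; for $|a|>1$ the identity as literally written fails (one would instead want hyperbolic Chebyshev-type behavior and the bound $|T_n(a)|\le 1$ breaks). So I would state at the outset that the lemma is applied (and should be read) in the regime $|a|\le 1$, consistent with the $|\theta|\le 1/3$ restriction announced just above the lemma; this is also why the naive bound $|T_n(a)|\le 1$ in the tail estimate is legitimate. With that caveat in place, the rest is the routine Bessel-tail-versus-factorial computation sketched above, and the inverse-factorial bound immediately delivers the stated $\mathcal{O}(\log(1/\epsilon)/\log\log(1/\epsilon))$ truncation order.
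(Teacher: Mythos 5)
Your proof is correct and follows essentially the same route as the paper: substitute $\theta=\arccos(a)$ into the Jacobi--Anger expansion to obtain the identity via $T_n(a)=\cos(n\arccos(a))$, then bound the tail by $2\sum_{n>M}|J_n(a)|$ and invoke the superexponential decay of the Bessel functions to get $M^*\in\mathcal{O}(\log(1/\epsilon)/\log\log(1/\epsilon))$; the only difference is that the paper cites prior work for the tail bound while you derive it explicitly from $|J_n(a)|\le \frac{1}{n!}(|a|/2)^n$ and Stirling. Your caveat that the identity and the estimate $|T_n(a)|\le 1$ require $|a|\le 1$ --- so the lemma's ``for any $a\in\mathbb{R}$'' should really be read as $|a|\le 1$, consistent with the $|\theta|\le 1/3$ restriction in the surrounding construction --- is a valid observation that the paper's proof leaves implicit.
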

\begin{proof}
From the Jacobi Anger expansion we have that for any $\theta,a\in \mathbb{R}$
\begin{equation}
    e^{ia\cos(\theta)} = J_0(a) + 2\sum_{n>0}^\infty (i)^n J_n(a) \cos(n\theta).
\end{equation}
Next we choose $\theta = \arccos(a)$ and the fact that $T_n(a) := \cos(n\arccos(a))$ to see the claimed result.

To see the convergence result, note that this is a special case of the Jacobi-Anger expansion with $\theta = \arccos(a)$.  Previous work has shown through the asymptotics of Bessel functions that for any $\theta$~\cite{Gily_n_2019,Low_2017}
\begin{equation}
    2\sum_{n>M^*}  |J_n(a)||\cos(\theta)|\le 2\sum_{n>M^*}  |J_n(a)| \le \epsilon
\end{equation}
if
\begin{equation}
    M^* \in \mathcal{O}\left(\frac{\log(1/\epsilon)}{\log\log(1/\epsilon)} \right)
\end{equation}
\end{proof}
Next in order to use this in our following lemmas, we propose the following ancilla-free method for constructing a block-encoding of the logarithm of a matrix.
\begin{lemma} \label{lem:BERotAngle}
Let $\theta \le 3/10$ and let $\epsilon$ be a positive number such that $\epsilon < 1- \arcsin(1/3)$ then given access to an oracle that yields $e^{-iZ\theta}$ there exists an algorithm that yields a unitary $V\in \mathbb{C}^{2\times 2}$ such that 
$
|\bra{0}V\ket{0} - \theta|\le \epsilon
$
using no ancillary qubits and a number of queries to $e^{-iZ\theta}$
and additional single qubit operations that scales as 
$$
\mathcal{O}\left(\log\left(\frac{1}{\epsilon} \right) \right)
$$
\end{lemma}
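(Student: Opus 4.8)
The plan is to reduce the claim to a single application of the single-qubit QSP theorem (Theorem~\ref{thm:qsp}) in which the implemented polynomial approximates $\arcsin$ on a short symmetric interval about the origin. First I would turn the oracle into a usable QSP signal. Composing $e^{-iZ\theta}=\mathrm{diag}(e^{-i\theta},e^{i\theta})$ with the fixed ${\rm SU}(2)$ gate $iZ$ gives
\[
iZ\,e^{-iZ\theta}=\mathrm{diag}\!\left(e^{i(\pi/2-\theta)},e^{-i(\pi/2-\theta)}\right)=W_Z(\pi-2\theta),
\]
i.e.\ a genuine signal unitary $e^{i\frac{\vartheta}{2}Z}$ with $\vartheta=\pi-2\theta$ and half-angle cosine $\cos(\vartheta/2)=\cos(\pi/2-\theta)=\sin\theta$. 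Interleaving this effective signal with $e^{i\phi_kX}$ processing rotations therefore produces, by Theorem~\ref{thm:qsp}, the $2\times2$ unitary $U_{\vec{\phi}}$ with $\langle 0|U_{\vec{\phi}}|0\rangle={\rm Poly}(\sin\theta)$ for any real ${\rm Poly}$ of degree $\le d$, parity $d\bmod 2$, and $\|{\rm Poly}\|_{[-1,1]}\le 1$. This is a single-qubit circuit, so no ancilla is used; each occurrence of the effective signal is one query to $e^{-iZ\theta}$ plus an $\mathcal{O}(1)$ single-qubit correction (and all of the $iZ$ factors and processing rotations can be precompiled into $d+1$ single-qubit gates), so a degree-$d$ sequence costs $\mathcal{O}(d)$ queries and $\mathcal{O}(d)$ single-qubit gates.

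Next I would specify the polynomial. Since $|\theta|\le 3/10$ we have $|\sin\theta|\le\sin(3/10)<1/3$, and $\arcsin(\sin\theta)=\theta$ because $|\theta|<\pi/2$. So it suffices to produce an odd real polynomial $P$ with $\deg P=\mathcal{O}(\log(1/\epsilon))$, $\|P\|_{[-1,1]}\le 1$, and $\|P(y)-\arcsin(y)\|_{[-1/3,1/3]}\le\epsilon$: taking $V:=U_{\vec{\phi}}$ with the phases $\vec{\phi}$ that generate $P$ under QSP (these exist by Theorem~\ref{thm:qsp} and are classically computable in $\mathrm{poly}(d,\log(1/\epsilon))$ time) gives $\langle 0|V|0\rangle=P(\sin\theta)$ and hence $|\langle 0|V|0\rangle-\theta|=|P(\sin\theta)-\arcsin(\sin\theta)|\le\epsilon$, with query and gate count $\mathcal{O}(\log(1/\epsilon))$ as claimed.

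The hard part is the existence of such a $P$: $\arcsin$ takes values in $[-\pi/2,\pi/2]$ on $[-1,1]$, so its truncated Taylor series (which increases to $\pi/2>1$ at $y=1$) and indeed any good uniform approximant on $[-1,1]$ violate the QSP normalization $\|P\|_{[-1,1]}\le 1$. I would handle this by (i) truncating the Taylor series of $\arcsin$ at $M=\mathcal{O}(\log(1/\epsilon))$ terms, which has error $\mathcal{O}((1/3)^{2M})$ on $[-1/3,1/3]$ since the series has radius of convergence $1$, and (ii) multiplying by an even cutoff polynomial $g$ of degree $\mathcal{O}(\log(1/\epsilon))$ that is $\epsilon$-close to $1$ on $[-1/3,1/3]$, bounded by $1$ on $[-1,1]$, and exponentially small outside a slightly larger interval, so the product is sub-unit everywhere on $[-1,1]$; one then takes $d$ odd so that $P=g\cdot(\text{truncation})$ has the right parity. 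This is exactly where the hypothesis $\epsilon<1-\arcsin(1/3)$ enters: it guarantees $|\arcsin(y)|+\epsilon<1$ on $[-1/3,1/3]$, so the product is automatically below $1$ on the interval we care about, while the cutoff controls it elsewhere. Alternatively, this bounded-approximation step can be obtained by adapting the polynomial-approximation lemmas of \cite{Gily_n_2019} --- used there to approximate $\tfrac{2}{\pi}\arcsin$ --- to the function $\arcsin$ on $[-1/3,1/3]$.
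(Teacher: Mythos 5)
Your proposal is correct and follows essentially the same route as the paper: shift the signal by a fixed $\pi/2$ rotation so that $\sin\theta$ sits in the QSP-accessible matrix entry, then apply a degree-$\mathcal{O}(\log(1/\epsilon))$ single-qubit QSP polynomial of the form (arcsin approximant) $\times$ (rectangle/cutoff polynomial), with the hypothesis $\epsilon<1-\arcsin(1/3)$ invoked in exactly the same place to keep the product sub-unit. The only deviation is that you build the arcsin approximant from the truncated Taylor series rather than the paper's Lagrange interpolant on a uniform grid, which if anything simplifies the boundedness argument, since the positive Taylor coefficients bound the truncation by $\pi/2$ on all of $[-1,1]$ and spare you the paper's derivative estimates outside the approximation interval.
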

\begin{proof}
Our strategy is to provide an inexpensive way of block-encoding the rotation angle in the upper-left corner of our matrix.  We achieve this by recognizing that
\begin{equation}
e^{iX \pi/2}\begin{bmatrix}
    \cos(\theta) & -i \sin(\theta)\\
    -i\sin(\theta) & \cos(\theta)
\end{bmatrix} = \begin{bmatrix}
    \sin(\theta) & -i\cos(\theta)\\
    -i\cos(\theta) &\sin(\theta)
\end{bmatrix}
\end{equation}
Thus we can apply QSP to convert this rotation to having $\theta$ in the top left hand block by applying a polynomial that approximates $\arcsin(a)$ to the top left entry $a:=\sin(\theta)$ in this context.  The central problem here is that there cannot be a polynomial that precisely satisfies the QSP assumptions needed because all such functions must have range within $[-1,1]$.  Arcsine in contrast has range $[-\pi/2,\pi/2]$. 

To this end, let us restrict the range such that $|\theta| \le 1/3$.  In this case we have that
\begin{equation}
    |\arcsin(\theta)| \le \arcsin(3/10)\approx 0.305.
\end{equation}
Under these assumptions we can instead implement the product of a top-hat function and the arcsine function on the slightly larger interval $[-1/\pi,1/\pi]$
\begin{equation}
    P(x) = (H(x+1/\pi)-H(x-1/\pi)) \arcsin(x):= W(x) \arcsin(x)
\end{equation}
where $H(x)$ is the Heaviside function.  It is then easy to see that $P(x) \in [-1,1]$ for all $x\in [-1,1]$.

First let us consider the higher-order derivatives of $\arcsin(x)$.  From the Taylor series of $\arcsin$ we see that for any $|x|< L$ we can use the Gamma-function to express
\begin{align}
    |\partial_x^q \arcsin(x)| &=\left|\partial_x^q \sum_{k=0}^\infty \frac{x^{1+2k} \Gamma(k+1/2)}{k!(1+2k)\Gamma(1/2)}\right|\nonumber\\
    &\le \sum_{k\ge q/2 -1} \frac{|x^{1+2k-q}|(1+2k)! \Gamma(k+1/2)}{k!(1+2k)\Gamma(1/2) (1+2k-q)!}\nonumber\\
    &\le \sum_{k\ge q/2 -1} \frac{L^{1+2k-q}(1+2k)! \Gamma(k+1/2)}{k!(1+2k)\Gamma(1/2) (1+2k-q)!}\nonumber\\
    &\le q!\sum_{k\ge q/2 -1} \frac{L^{1+2k-q}}{(1+2k)}\binom{1+2k}{q}\nonumber\\
    &\le q!\sum_{k\ge q/2 -1} {L^{1+2k-q}}{}\binom{1+2k}{q}\nonumber\\
    &\le \sum_{k\ge q/2 -1} {L^{1+2k-q}}{}(1+2k)^q\nonumber\\
      &\le \sum_{u\ge 0} {L^{u}}{}(u+q)^q\nonumber\\
&\le q^q\sum_{u\ge 0} {L^{u}}e^u\nonumber\\
&\le \frac{q^q}{1-eL }
\end{align}

Using a Lagrange interpolatory polynomial of degree $Q-1$ (denoted $P_Q$) using a uniform grid on the interval $x\in [-L,L]$, we can approximate arcsine x we see that the remainder term in the series approximation, $P_Q(x)$, truncated at order $Q$ is from Taylor's remainder theorem
\begin{align}
|P_Q(x) - \arcsin(x)| &\le \frac{\max_x |\partial_x^{Q+1} \arcsin(x)||\prod_{j=0}^Q (x-(-L+2Lj/Q)) | }{(Q+1)!}\nonumber\\
&\le \frac{(Q+1)^{Q+1}((Q+1)!(2L/Q)^{Q+1})}{(Q+1)!(1-eL)}\nonumber\\
&\le \frac{2(2L)^{Q+1} (1+1/Q)^Q}{1-eL}\nonumber\\
&\le \frac{2e(2L)^{Q+1}}{1-eL}\label{eq:arcsinD}
\end{align}
Thus we can prove exponentially decreasing error with $Q$ for any $L< e^{-1}$ and our restriction that $L\le 1/\pi$ is taken because $1/\pi < e^{-1}$ (although tighter bounds could be realized) and $1/\pi > 3/10$. This latter bound is taken to provide our approximation to the Heaviside function some ``padding" that we will later need to guarantee that the resultant polynomial is bounded within $[-1,1]$.

We can then use~\eqref{eq:arcsinD} to show that if we wish $|P_Q(x) -\arcsin(x)| \le \epsilon$ on the interval $[-1/\pi,1/\pi]$ then it suffices to pick
\begin{equation}
    Q=  \left\lceil\frac{\log\left(\frac{2e}{(1-e/\pi)\epsilon)}\right)}{\log(e/2)} \right\rceil-1 \le \alpha \log_2(1/\epsilon), 
\end{equation}
where $\alpha$ is a universal constant. 
Thus a logarithmic degree polynomial suffices to closely approximate the function over this restricted interval.

Next we would like to consider what happens with the polynomial  $P_Q$ outside of the interval $[-1/\pi,1/\pi]$.   As $Q$ is arbitrary at this stage,  let us take $Q$ to be even.  For even $Q$ we have that the derivative of the degree $Q+1$ polynomial can be bounded by
\begin{align}
    |\partial_x P_Q(x)| &\le  \sum_{j=0}^{Q} \arcsin(-1/\pi + 2j/\pi Q) \left|\partial_x\prod_{k=0,k\ne j}^Q \frac{(x-x_k)}{(x_j-x_k)}\right| \nonumber\\
    &\le Q(Q+1) \arcsin(1/3) \max_{j,\ell} \prod_{k=0,k\ne j,\ell}^Q \frac{(x-x_k)}{(x_j-x_k)}\nonumber\\
    &\le Q(Q+1) \arcsin(1/\pi)  \frac{(2/\pi)^{Q-1}}{\min_j\prod_{k=0,k\ne j}^Q(x_j-x_k)}\nonumber\\
    &\le Q(Q+1) \arcsin(1/\pi)  \frac{(2/\pi)^{Q-1}}{(\prod_{k=1}^{Q/2}(2k/\pi Q))^2}\nonumber\\
    & =\frac{\pi Q(Q+1) \arcsin(1/\pi)}{2}  \frac{(2/\pi)^{Q}Q^Q}{(Q/2)!^2}\nonumber\\
    & \le\frac{\pi Q(Q+1) \arcsin(1/\pi)}{2}  \frac{(1/\pi)^{Q}Q^Q}{\left(\frac{Q/2}{e} \right)^Q}\nonumber\\
    &=\frac{\pi Q(Q+1)(2e/\pi)^{Q} \arcsin(1/\pi)}{2} 
\end{align}
Thus the maximum error that is observable a distance $y$ away from the last interpolation point is
\begin{equation}
\mathcal{O}\left({yQ^2(2e/\pi)^Q} \right)\subseteq \mathcal{O}\left(2^Q y \right)\subseteq \mathcal{O}\left(\frac{ y}{\epsilon^{\alpha}} \right)
\end{equation}
Thus we need to ensure that the function $W$ achieves a value that is at most $\epsilon^{\alpha}$ outside the interval of interest.

Using existing results from polynomial approximation contained in~\cite{Low_2017}, there exists a polynomial approximation to $W$, $P_W$, that maps $[-1,1]\mapsto [-1,1]$  attains a value of $\epsilon'$ for all $x\ge 1/\pi+\delta'$ such that
\begin{equation}
    {\rm deg}(P_W) \in \mathcal{O}\left(\frac{\log(1/\epsilon')}{\delta'} \right)
\end{equation}
We then see that if we pick
\begin{eqnarray}
    \delta' \le \frac{\pi - 3}{3\pi}, \quad \epsilon' \in \mathcal{O}\left(\epsilon^\alpha \right). 
\end{eqnarray}
From this we can see immediately that
\begin{equation}
    {\rm deg}(P_W P_Q)= {\rm deg}(P_W) + {\rm deg}(P_Q)\in \mathcal{O}\left(\log\left(\frac{1}{\epsilon}\right) \right) 
\end{equation}
then it is easy to see from the triangle inequality that
\begin{equation}
    |P_Q(x) P_W(x) - \arcsin(x) (H(x+1/\pi) - H(x-1/\pi))| \le \begin{cases} |x| \le 1/\pi, &\mathcal{O}(\epsilon + \epsilon') \subseteq \mathcal{O}(\epsilon)\\
    |x| \in (1/\pi, 1/3], & \mathcal{O}(1) \\
    |x| \in (1/3, 1], & \mathcal{O}(\epsilon'/\epsilon^\alpha) \subseteq \mathcal{O}(\epsilon)\end{cases}.
\end{equation}
We then see that if $\epsilon$ is sufficiently small then the conditions needed for GQSP implementation of the product $P_Q P_W$ is possible, if we can show that the value of the approximation $|P_Q(x) P_W(x)|\le 1$ for all $x$ such that $|x| \in (1/\pi,1/3]$.  To see this note that
\begin{equation}
    |P_Q(x)|\le \arcsin(1/3) + \epsilon~\forall~ x\in [-1/\pi,-1/3)\cup (1/3,1/\pi ], \quad |P_W(x)| \le 1~\forall~x\in [-1/\pi,-1/3)\cup (1/3,1/\pi]
\end{equation}
Therefore provided $\epsilon \le 1-\arcsin(1/3)$ we have that 
\begin{equation}
|P_Q(x) P_W(x)| \le 1~\forall~x\in [-1/3,1/\pi)\cup (1/\pi,1/3]
\end{equation}
\end{proof}

We can now combine Lemma \ref{lem:JacobiAngerSq} and \ref{lem:BERotAngle} to square small input angles where $| \theta | \leq 1/3$ as provided by our oracle. 

\begin{corollary}\label{cor:PUPSmall}
Assume that we are given an oracle, $W_Z(\theta)$, that implements $e^{iZ\theta/2}$ for $|\theta|\le 3/5$. We can then construct a unitary $V\in \mathbb{C}^{2\times 2}$ such that $\|V-\exp(i(\frac{\theta^2}{2}) Z)\|\le \epsilon$ where using
a number of applications of an input oracle rotation $W_Z(\theta)$  and single qubit unitaries 
$$
\mathcal{O}\left(\frac{\log^2(1/\epsilon)}{\log\log(1/\epsilon)} \right)
$$
\end{corollary}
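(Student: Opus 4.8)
The plan is to re-run the two-stage argument behind Lemma~\ref{lem:square}, but with the logarithmic block-encoding of Lemma~\ref{lem:log} (which costs ancillas and a large block-encoding constant) replaced by the ancilla-free construction of Lemma~\ref{lem:BERotAngle}, and with Lemma~\ref{lem:JacobiAngerSq} used to fold the squaring and the exponentiation into a single polynomial transformation. For the first, angle-block-encoding step: since $|\theta|\le 3/5$ gives $|\theta/2|\le 3/10$, I would apply Lemma~\ref{lem:BERotAngle} to the oracle $W_Z(\theta)^\dagger = e^{-iZ(\theta/2)}$ with rotation angle $\theta/2$; this returns, with no ancilla and $\mathcal{O}(\log(1/\epsilon_1))$ queries to $W_Z(\theta)$, a single-qubit unitary $V'$ whose matrix element $a := \bra 0 V'\ket 0$ satisfies $|a-\theta/2|\le\epsilon_1$ (and is real up to an $\mathcal{O}(\epsilon_1)$ correction). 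Up to left/right $Z$-phases that can be absorbed into the subsequent processing rotations, $V'$ is a legitimate QSP signal encoding the scalar $a\in[-1,1]$.

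For the second, squaring-and-exponentiation step: because $2a^2 = \theta^2/2 + \mathcal{O}(\epsilon_1)$, it suffices to apply to $V'$ a transformation approximating $a\mapsto e^{2ia^2}$. Rescaling the argument of Lemma~\ref{lem:JacobiAngerSq} by $\sqrt 2$ (which stays inside $[-1,1]$ since $|a|<1/3+\epsilon_1$) and then replacing each Bessel coefficient $J_n(\sqrt 2 a)$ by its truncated Taylor polynomial — which adds only an $\mathcal{O}(\log(1/\epsilon_2)/\log\log(1/\epsilon_2))$ term to the degree and $\mathcal{O}(\epsilon_2)$ to the error — yields a genuine polynomial $R$ with $\deg R = d_2\in \mathcal{O}(\log(1/\epsilon_2)/\log\log(1/\epsilon_2))$ and $\|R - e^{2i(\cdot)^2}\|_{[-1,1]}\le\epsilon_2$. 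Since $e^{2ia^2}$ has unit modulus on $\mathbb{R}$, we get $\|R\|_{[-1,1]}\le 1+\epsilon_2$, so rescaling $R$ by $1/(1+\epsilon_2)$ makes it admissible as the $P$-polynomial in Theorem~\ref{thm:GQSP} at the cost of another $\mathcal{O}(\epsilon_2)$ of error; one then chooses a companion $Q$ with $|R|^2+|Q|^2\equiv 1$, necessarily $|Q|=\mathcal{O}(\sqrt{\epsilon_2})$ on the sub-interval where $a$ actually lives.

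Feeding $V'$ and the GQSP phases realizing $(R,Q)$ into Theorem~\ref{thm:GQSP} exactly as in the second half of the proof of Theorem~\ref{thm:iteratedQSP}, the resulting circuit — after its GQSP control qubit decouples back to $\ket 0$ (up to amplitude error $\mathcal{O}(\sqrt{\epsilon_2})$ because $|R|\approx 1$) — realizes a single-qubit map with $(0,0)$ entry $R(a)\approx e^{2ia^2}\approx e^{i\theta^2/2}$ and off-diagonal entries of norm $\mathcal{O}(\sqrt{\epsilon_2})$; its nearest unitary $V$ then satisfies $\|V-\exp(i(\theta^2/2)Z)\| = \mathcal{O}(\epsilon_1+\sqrt{\epsilon_2})$. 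Taking $\epsilon_1 = \Theta(\epsilon)$ and $\epsilon_2 = \Theta(\epsilon^2)$ makes this at most $\epsilon$ while leaving $d_2\in \mathcal{O}(\log(1/\epsilon)/\log\log(1/\epsilon))$. Each of the $\mathcal{O}(d_2)$ GQSP layers uses $V'$ a constant number of times and each use of $V'$ costs $\mathcal{O}(\log(1/\epsilon_1))=\mathcal{O}(\log(1/\epsilon))$ oracle queries, so the composition rule of Theorem~\ref{thm:iteratedQSP} gives a total of $\mathcal{O}(d_2\log(1/\epsilon)) = \mathcal{O}(\log^2(1/\epsilon)/\log\log(1/\epsilon))$ queries to $W_Z(\theta)$ together with single-qubit gates, as claimed.

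The crux is the error-and-degree bookkeeping across the composition rather than any individual ingredient: one must verify that the GQSP control register truly decouples to within the stated tolerance, that the output can be treated as a genuine $2\times2$ unitary close to the ideal $Z$-rotation (hence the nearest-unitary step and the implicit completion of $R$ by a companion polynomial $Q$), and that forcing $\epsilon_2=\Theta(\epsilon^2)$ in order to absorb the $\sqrt{\epsilon_2}$ leakage does not inflate $d_2$ past $\mathcal{O}(\log(1/\epsilon)/\log\log(1/\epsilon))$ — which is precisely what keeps the final cost at $\mathcal{O}(\log^2(1/\epsilon)/\log\log(1/\epsilon))$ rather than something larger.
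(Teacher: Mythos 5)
Your proposal is correct and follows essentially the same route as the paper: Lemma~\ref{lem:BERotAngle} supplies the ancilla-free encoding of $\theta/2$ in the signal unitary, Lemma~\ref{lem:JacobiAngerSq} together with Theorem~\ref{thm:GQSP} realizes the squared-phase exponential, and the total cost is the product of the two stage costs. The one place the paper is more careful is the inner error budget --- it takes the arcsine-stage tolerance to be $\Theta(\epsilon/M^*)$ because that block is reused $M^*$ times inside the outer GQSP circuit, whereas you take it to be $\Theta(\epsilon)$, which undercounts by a factor of the outer degree unless you argue the inner error enters only through the encoded value $a$ --- but either choice leaves the final $\mathcal{O}\left(\log^2(1/\epsilon)/\log\log(1/\epsilon)\right)$ bound unchanged since $\log(M^*/\epsilon)=\Theta(\log(1/\epsilon))$.
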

\begin{proof}
Lemma~\ref{lem:JacobiAngerSq} shows that for error $\epsilon'$ there exists a polynomial of degree $M^* \in \mathcal{O}(\log(1/\epsilon')/\log\log(1/\epsilon'))$ such that  an application of Theorem~\ref{thm:GQSP} will allow us to implement a transformation on the input oracle $R_x(\arccos(a))$ denoted $\Lambda_{2} : \mathbb{C}^{2\times 2} \mapsto \mathbb{C}^{2\times 2}$ such that
\begin{equation}
    \Lambda_2 \left(\begin{bmatrix}
        a & -i\sqrt{1-a^2}\\
        -i\sqrt{1-a^2} & a 
    \end{bmatrix}\right)=
    e^{ia^2 Z} +\epsilon' E
\end{equation}
for an operator $E$ with norm at most $1$ while using a number of queries to $W_z(\theta)$ and gate operations in $\mathcal{O}(M^*)=\mathcal{O}(\log(1/\epsilon')/\log\log(1/\epsilon'))$. 

Similarly Lemma~\ref{lem:BERotAngle} shows that there exists a polynomial of degree at most $M \in \mathcal{O}(\log(1/\epsilon''))$ such that the result of Theorem~\ref{thm:GQSP} probides a construction that yields a transformation on an inital oracle $W_z(\theta)$ denoted $\Lambda_{\arcsin}:\mathbb{C}^{2\times 2} \mapsto \mathbb{C}^{2\times 2}$ such that for $\theta\le 3/5$
\begin{equation}
    \Lambda_{\arcsin}(e^{-iZ\theta/2}) = \begin{bmatrix}
        \frac{\theta}{2} & - i \sqrt{1- \frac{\theta^2}{4}}\\
        - i \sqrt{1- \frac{\theta^2}{4}} & \frac{\theta}{2}
    \end{bmatrix} + \epsilon'' F
\end{equation}
where $\|F\|\le 1$ using a number of single qubit operations and queries to $W_z$ that scale as $\mathcal{O}(\log(1/\epsilon''))$.  

We then see that that for every $\epsilon>0$ there exists $\epsilon'$ and $\epsilon''$ such that
\begin{equation}
    \Lambda_2( \Lambda_{\arcsin}( e^{-i Z\theta/2})) = e^{-iZ (\theta/2)^2} + \epsilon G,
\end{equation}
where $\|G\|\le 1$.
Thus from Box 4.1 from~\cite{nielsen2001quantum} the error of these two composed sequences of rotations is at most additive.   We then have that in order to ensure that the overall error in the algorithm is $\epsilon$ it suffices to take $\epsilon' = \epsilon$ and $\epsilon'' \in \Theta(\epsilon/ M^*)$ which implies that the overall cost of composing these two channels is
\begin{equation}
    \mathcal{O}\left(\frac{\log(\log(1/\epsilon)/\epsilon) \log(1/\epsilon)}{\log\log(1/\epsilon)} \right) = \mathcal{O}\left(\frac{\log^2(1/\epsilon)}{\log\log(1/\epsilon)} \right) 
\end{equation}
\end{proof}

\begin{lemma}[Generalized Multiplication of Phases]\label{lem:PMP}
Assume that you are provided a set of signal operators $\{W_Z(\theta_i) : i =1,\ldots,N\}$ and a set of non-negative integers $\{\ell_i: i=1\ldots N\}$ then we can construct a block encoding of $\exp(i\prod_{i=1}^{2^\mu}(\frac{\theta_i}{2})^{\ell_i}Z)$ using a number of queries to $W_Z(\theta_i)=e^{i \theta_i Z/2}$ that obey the following scalings.
\begin{enumerate}
    \item If each $|\theta_i| \le 3/5$ and each $\ell_i=1$ then there exists a protocol that uses no ancillary qubits and yields an operator $V$ such that (up to global phases) $\|V -\exp(i\prod_{i=1}^{2^\mu}(\frac{\theta_i}{2})^{\ell_i}Z)\|\le \epsilon$ using a number of applications of the $W_z$ oracles and single qubit operations that is in
    $$\mathcal{O}\left(\left(\frac{4\kappa\log^2(1/\epsilon')}{\log\log(1/\epsilon')}\right)^{\mu}\right)$$
    \item Otherwise there exists an algorithm that yields a  $(1-O(\epsilon),O(\max(\ell_i)),\epsilon)$ block encoding of $\exp(i\prod_{i=1}^{2^\mu}(\frac{\theta_i}{2})^{\ell_i}Z)$ and a universal constant $\kappa\ge 1$ such that the total number of single qubit operations and controlled queries to $W_z(\theta_i)$ needed scales as
    $$
\mathcal{O}\left(\left(\frac{4\kappa\log^2(1/\epsilon')}{\log\log(1/\epsilon')}\right)^{\mu} \max_{l'\in \{\ell_i\}}\frac{(\frac{\pi}{2})^{l'}\log^2(l'/\epsilon')}{\log\log(l'/\epsilon')} \right).
    $$
\end{enumerate}
\end{lemma}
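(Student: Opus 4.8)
\emph{Proof strategy.} The plan is to realize the $2^{\mu}$-fold product by a balanced binary tree of depth $\mu$ whose every internal node multiplies exactly two phases, reducing a single pairwise multiplication to a constant number of calls to the squaring primitive through the polarization identity $xy=\tfrac14\big[(x+y)^2-(x-y)^2\big]$. At a leaf I would prepare $\exp\!\big(i(\tfrac{\theta_i}{2})^{\ell_i}Z\big)$ (possibly with an extra constant coefficient in the exponent, see below): in case (1) this is just the oracle $W_Z(\theta_i)$, and in case (2) it is produced by Corollary~\ref{cor:PUP} at cost $\mathcal{O}\big((\tfrac{\pi}{2})^{\ell_i}\log^2(\ell_i/\epsilon')/\log\log(\ell_i/\epsilon')\big)$ using $\mathcal{O}(\ell_i)$ ancillae. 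Given an internal node whose two children are implemented as $\exp(ip_LZ)$ and $\exp(ip_RZ)$, I would form $\exp\!\big(i(p_L\pm p_R)Z\big)=\exp(ip_LZ)\exp(ip_RZ)^{\pm1}$ (the diagonal factors commute), feed each into the squaring routine — Corollary~\ref{cor:PUPSmall} in case (1), Lemma~\ref{lem:square} together with Corollary~\ref{cor:PUP} in case (2) — with the exponentiation coefficient chosen so the outputs are $\exp\!\big(i\tfrac14(p_L\pm p_R)^2Z\big)$, and then compose the first output with the inverse of the second to obtain $\exp(ip_Lp_RZ)$. Theorem~\ref{thm:iteratedQSP} is what licenses treating these composed QSP sequences as a single iterated-QSP construction whose cost is the product of the pieces' costs; iterating up the tree yields $\exp\!\big(i\prod_i(\tfrac{\theta_i}{2})^{\ell_i}Z\big)$.

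The real work is maintaining, up the tree, the hypotheses required by whichever squaring lemma is invoked. Each such lemma needs its input rotation angle inside a fixed window (roughly $|\theta|\le\tfrac35$ for Corollary~\ref{cor:PUPSmall}, $\|\mathcal H\|\le\tfrac12$ in Lemma~\ref{lem:log} and Lemma~\ref{lem:square}); since node values are products of leaf values, provided every leaf value has modulus at most $\tfrac14$ the node values shrink geometrically and every $|p_L\pm p_R|$ stays comfortably in range. In case (1) the leaf moduli are only guaranteed $\le\tfrac{3}{10}$, so I would apply a single one-time halving of the leaf oracles (ancilla-free, via the construction behind Lemma~\ref{lem:BERotAngle}, at cost $\mathcal{O}(\log(1/\epsilon'))$ acting on cost-$1$ leaves) and compensate with a constant factor absorbed into the depth-$1$ exponentiation coefficient; in case (2) the analogous rescaling is folded into the coefficient $t$ of Corollary~\ref{cor:PUP} at the leaves together with one final constant rescaling at the root. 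Because all these rescalings are by $\mathcal{O}(1)$ factors, they only inflate the universal constant $\kappa$. I would also observe that the ancilla count does not grow with $\mu$: an internal node's squaring workspace is constant-size and the intermediate objects are $2\times2$ unitaries up to $\mathcal{O}(\epsilon')$ error, so the workspace is reused level-to-level — case (1) therefore stays ancilla-free and case (2) needs only the $\mathcal{O}(\max_i\ell_i)$ ancillae of the leaf construction.

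The complexity then follows from the obvious recursion. Let $T(k)$ be the number of $W_Z$ queries at a depth-$k$ node and $S=\mathcal{O}(\log^2(1/\epsilon')/\log\log(1/\epsilon'))$ the query cost of one squaring. Forming $\exp(i(p_L\pm p_R)Z)$ costs $2T(k-1)$, so squaring it costs $2S\,T(k-1)$, and producing and composing the two squared outputs costs $\mathcal{O}(S\,T(k-1))$; absorbing the negligible halving/rescaling overheads and the remaining constants into $\kappa$ yields $T(k)\le 4\kappa S\,T(k-1)$ and hence $T(\mu)\le(4\kappa S)^{\mu}T(0)$. Taking $T(0)=\mathcal{O}(1)$ in case (1) and $T(0)=\mathcal{O}\big((\tfrac{\pi}{2})^{l'}\log^2(l'/\epsilon')/\log\log(l'/\epsilon')\big)$ maximized over $l'\in\{\ell_i\}$ in case (2) reproduces both stated bounds.

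Finally I would close the error accounting. Errors compose additively across the circuit (Box~4.1 of~\cite{nielsen2001quantum}), but each squaring call amplifies the error of its input oracle by its $\mathcal{O}(S)$ uses of that oracle, so if $\eta_k$ denotes the error at depth $k$ then $\eta_k\le 4S\,\eta_{k-1}+\mathcal{O}(\epsilon')$, giving $\eta_\mu=\mathcal{O}\big((4S)^{\mu}\epsilon'\big)$; forcing $\eta_\mu\le\epsilon$ amounts to choosing $\epsilon'$ with $\log(1/\epsilon')=\Theta\big(\log(1/\epsilon)+\mu\log\log(1/\epsilon)\big)$, which leaves every query bound above unchanged up to $\kappa$. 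I expect the main obstacle to be precisely this invariant maintenance — verifying that after each pairwise multiplication the intermediate rotation angle and block-encoding subnormalization still satisfy the domain hypotheses of the squaring lemma in use, and that the $\mathcal{O}(S)$-per-level error amplification stays benign — since the binary-tree structure and the polarization identity themselves are immediate.
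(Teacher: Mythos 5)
Your proposal follows essentially the same route as the paper's proof: a depth-$\mu$ binary tree that reduces each pairwise phase multiplication to a constant number of calls to the squaring primitive via a polarization-type identity, with leaves supplied by the raw oracle (case 1) or Corollary~\ref{cor:PUP} (case 2), the recursion $C_\nu \le 4\kappa S\, C_{\nu-1}$ with $S$ the squaring cost, and the same geometric error amplification that forces $\log(1/\epsilon') = \Theta(\log(1/\epsilon)+\mu\log\log(1/\epsilon))$ without changing the asymptotic query count. The only differences are cosmetic: you use the two-squaring identity $xy=\tfrac{1}{4}\left[(x+y)^2-(x-y)^2\right]$ where the paper squares the sum and subtracts the two individual squares (implementing the sign flips by $X$-conjugation), and you are somewhat more explicit than the paper about verifying that intermediate angles stay inside the domain of validity of the squaring lemmas up the tree.
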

\begin{proof}
The first stage in the proof involves showing that we can implement multiplication in phase.  We can see how to do this using the following observation
\begin{equation}
    \left(\frac{\theta_{1}}{2} + \frac{\theta_{2}}{2}\right)^2 = \left(\frac{\theta_{1}}{2}\right)^{2} + \left(\frac{\theta_{2}}{2}\right)^{2} + \frac{1}{2} \theta_{1}\theta_{2} 
\end{equation}
Thus we can express the product between two results
\begin{equation}
    \frac{1}{2} \theta_1 \theta_2 = \left(\frac{\theta_{1}}{2} + \frac{\theta_{2}}{2}\right)^2-\left(\frac{\theta_{1}}{2}\right)^{2}-\left(\frac{\theta_{2}}{2}\right)^{2}.
\end{equation}
Next note that
\begin{equation}
    W_Z(\theta_1) W_Z(\theta_2) = W_Z(\theta_1+\theta_2).
\end{equation}
Thus we can see that we can implement this using Lemma~\ref{lem:square} to implement a $(1-3\epsilon',6,\epsilon')$ block encoding of
\begin{equation}
    e^{i \theta_1\theta_2 Z/2} = e^{i \left(\frac{\theta_{1}}{2} + \frac{\theta_{2}}{2}\right)^2 Z}Xe^{i (\theta_1/2)^2 Z}e^{i (\theta_2/2)^2 Z}X = W_Z(\theta_1\theta_2).
\end{equation}
using a number of  signal operations or rotations, $N_W$, that scale as
\begin{equation}\label{eq:NW}
    N_W = \mathcal{O}\left( \frac{\log^2(1/\epsilon')}{\log\log(1/\epsilon')}\right). 
\end{equation}
This shows that we can implement the product between two rotations using the squaring algorithm.

Let us assume for the moment that we wish to implement for some constant $\mu\in \mathbb{Z}_+$
\begin{equation}
W_Z(\prod_{i=1}^{2^\mu} \theta_i).
\end{equation}
Now let us define the cost of implementing a product of $2^{\nu}$ phases to be $C_\nu$.  Let us then multiply each of the phases using a divide and conquer strategy using splits of size $1/2$.  Specifically let ${\rm SQUARE}$ be the phase squaring method of Lemma~\ref{lem:square} and assume that the number of queries needed to implement the method is at most $\kappa \log^2(1/\epsilon')/\log\log(1/\epsilon')$
\begin{eqnarray}
    {\rm SQUARE}\left( W_Z(\prod_{i=1}^{2^{\nu-1}} \theta_i)W_Z(\prod_{i=2^{\nu-1}+1}^{2^{\nu}} \theta_i)\right){\rm SQUARE}^{\dagger}\left( W_Z(\prod_{i=1}^{2^{\nu-1}} \theta_i))\right){\rm SQUARE}^{\dagger}\left( W_Z(\prod_{i=2^{\nu-1}+1}^{2^{\nu}} \theta_i)\right)
\end{eqnarray}
Thus the cost of implementing this is
\begin{equation}
    C_{\nu} = \frac{4\kappa\log^2(1/\epsilon')C_{\nu-1}}{\log\log(1/\epsilon')}
\end{equation}
If we define the maximum cost of performing one of the fundamental $W(\theta_i)$ to be $C_0$ then the solution to the recursion relation is
\begin{equation}\label{eq:Cmu}
    C_{\mu} \le \left(\frac{4\kappa\log^2(1/\epsilon')}{\log\log(1/\epsilon')}\right)^{\mu} C_0.
\end{equation}

Our ultimate target is to implement the generalized product 
$\exp(i\prod_i{\theta_i^{\ell_i}}Z/2)
$.  We can implement this by the product between each of the individual phases using the result of Corollary~\ref{cor:PUP} to find that 
\begin{equation}\label{eq:C0bd}
    C_0 \le \mathcal{O}\left(\max_{l'\in \{\ell_i\}}\frac{(\frac{\pi}{2})^{l'}\log^2(l'/\epsilon')}{\log\log(l'/\epsilon')} \right),
\end{equation}
We then see that
\begin{equation}
    C_\mu \le \mathcal{O}\left(\left(\frac{4\kappa\log(1/\epsilon')}{\log\log(1/\epsilon')}\right)^{\mu} \max_{l'\in \{\ell_i\}}\frac{(\frac{\pi}{2})^{l'}\log(l'/\epsilon')}{\log\log(l'/\epsilon')} \right)
\end{equation}

Now let us consider the case where the input rotation angles are promised to be smaller than $2/3$.  If this is the case then we can use the result of Corollary~\ref{cor:PUPSmall} to build a squaring circuit.  In this case we also assume that the powers $\ell_i$ are 1.  Next assume that the cost of performing a squaring operation is at most $\kappa \log^2(1/\epsilon') / \log\log(1/\epsilon')$.  Let $C_\nu$ be the cost of multiplying $2^{\nu}$ phases.  We then have that for any $\nu \ge 0$
\begin{equation}
    C_{\nu+1} = \left(\frac{4\kappa \log^2(1/\epsilon')}{\log\log(1/\epsilon')}\right)C_{\nu}.
\end{equation}
Let us assume that the error in layer $\nu$ is $\epsilon_\nu$.  We then have from the sub-additivity of error given in Box 4.1 of~\cite{nielsen2001quantum} that the error at layer $\nu+1$ in the formula is at most $\epsilon_{\nu}$ then we can choose $\epsilon' = \epsilon_\nu$
\begin{equation}
    \epsilon_{\nu+1}=\epsilon' + \left(\frac{4\kappa \log^2(1/\epsilon')}{\log\log(1/\epsilon')}\right) \epsilon_\nu \le  \left(\frac{5\kappa \log^2(1/\epsilon_\nu)}{\log\log(1/\epsilon_\nu)}\right) \epsilon_{\nu}
\end{equation}
The solution since the first non-trivial error occurs at the first layer of the multiplication we have that the error obeys
\begin{equation}
    \epsilon_\nu \le \prod_{j=1}^{\nu-1}\left(\frac{5\kappa \log^2(1/\epsilon_j)}{\log\log(1/\epsilon_j)}\right) \epsilon_1
\end{equation}
as $\epsilon_j$ is a monotonically increasing sequence we have that
\begin{equation}
    \epsilon_\nu \le \left(\frac{5\kappa \log^2(1/\epsilon_1)}{\log\log(1/\epsilon_1)}\right)^{\nu-1} \epsilon_1 \le (5\kappa \log^2(1/\epsilon_1))^{\nu-1} \epsilon_1 = \epsilon
\end{equation}
The solution to this is written in terms of the Lambert-W function as
\begin{equation}
    \epsilon_1 = e^{2(\nu-1) W(-\epsilon^{1/2(\nu-1)}/(5\kappa(2\nu-2)))} = \mathcal{O}\left(\epsilon/\log^{2(\nu-1)}(1/\epsilon)\right)
\end{equation}

We can then substitute the above result into the cost to find an asymptotic bound on the complexity via the following steps.  First, 
\begin{equation}
    C_{\nu} = \left(\frac{4\kappa \log^2(1/\epsilon_{\min})}{\log\log(1/\epsilon_{\min})}\right)^\nu C_{0}.
\end{equation}
Next by definition the product of $1$ angle requires $1$ query.  Thus the total cost is 
\begin{align}
    C_{\nu} &= \left(\frac{4\kappa \log^2(1/\epsilon_{\min})}{\log\log(1/\epsilon_{\min})}\right)^\nu \nonumber\\
    &= \mathcal{O}\left( \left(\frac{4\kappa \log^2(\log^{2(\nu-1)}(1/\epsilon)/\epsilon)}{\log\log(\log^{2(\nu-1)}(1/\epsilon)/\epsilon)}\right)^\nu\right)\nonumber\\
    &= \mathcal{O}\left( \left(\frac{4\kappa \log^2(1/\epsilon)}{\log\log(1/\epsilon)}\right)^\nu\right)
\end{align}
\end{proof}

Finally, with these results it is easy to see how an arbitrary polynomial can be generated.  We state this result below

\begin{corollary}[Implementation of Phase Polynomials]
Assume that you are provided a set of signal operators $\{W_Z(\theta_i) : i =1,\ldots,N\}$ and a set of non-negative integers $\{\ell_{i,j}: i=1\ldots N, j=1\ldots M\}$ and coefficients $\{\alpha_{j}: j=1,\ldots,M\}$ then we can construct a block encoding of $\exp(i\sum_{j=1}^M a_j\prod_{i=1}^{2^{\mu_j}}(\frac{\theta_{i,j}}{2})^{\ell_{i,j}}Z)$ using a number of queries to $W_Z(\theta_i)=e^{i \theta_i Z/2}$ that obey the following scalings.
\begin{enumerate}
    \item If each $|\theta_i| \le 3/5$ and each $\ell_i=1$ then there exists a protocol that uses no ancillary qubits and yields an operator $V$ such that (up to global phases) $\|V -\exp(i\sum_{j=1}^M\prod_{i=1}^{2^{\mu_j}}(\frac{\theta_i}{2})^{\ell_i}Z)\|\le \epsilon$ using a number of applications of the $W_z$ oracles and single qubit operations that is in
    $$\widetilde{\mathcal{O}}\left(M\left(\frac{4\kappa\log^2(1/\epsilon')}{\log\log(1/\epsilon')}\right)^{\max_j\mu_j}\right)$$
    \item Otherwise there exists an algorithm that yields a  $(1-O(\epsilon),O(\max(\ell_i)),\epsilon)$ block encoding of $\exp(i\sum_{j=1}^M\prod_{i=1}^{2^{\mu_j}}(\frac{\theta_i}{2})^{\ell_i}Z)$ and a universal constant $\kappa\ge 1$ such that the total number of single qubit operations and controlled queries to $W_z(\theta_i)$ needed scales as
    $$
\widetilde{\mathcal{O}}\left(M\left(\frac{4\kappa\log^2(1/\epsilon')}{\log\log(1/\epsilon')}\right)^{\max_j\mu_j} \max_{l'\in \{\ell_{i,j}\}}\frac{(\frac{\pi}{2})^{l'}\log^2(l'/\epsilon')}{\log\log(l'/\epsilon')} \right).
    $$
\end{enumerate}
\end{corollary}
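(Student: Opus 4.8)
The plan is to reduce the statement to $M$ independent invocations of Lemma~\ref{lem:PMP}, one per summand, followed by a product of commuting rotations. For each $j\in\{1,\dots,M\}$ I would view $a_j\prod_{i=1}^{2^{\mu_j}}(\theta_{i,j}/2)^{\ell_{i,j}}$ as a scalar multiple of a single monomial in the input phases and apply Lemma~\ref{lem:PMP} to the relevant subset of oracles $\{W_Z(\theta_{i,j})\}$ with per-term error $\epsilon'$: in case~1 ($|\theta_i|\le 3/5$ and all $\ell_{i,j}=1$) this produces a bare $2\times2$ unitary, and in case~2 a $(1-O(\epsilon'),\,O(\max_i\ell_{i,j}),\,\epsilon')$ block encoding of $\exp\!\big(i\prod_i(\theta_{i,j}/2)^{\ell_{i,j}}Z\big)$. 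The real coefficient $a_j$ is appended without changing the asymptotics by setting the Jacobi--Anger time parameter $t=a_j$ in the exponentiation step, exactly as in Corollary~\ref{cor:PUP}.

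Next I would multiply the $M$ operators together. Each is, up to error $O(\epsilon')$, a $Z$-rotation and therefore diagonal, so the factors commute and their ordered product approximates $\prod_j\exp\!\big(i a_j(\text{monomial})_j Z\big)=\exp\!\big(i\sum_j a_j(\text{monomial})_j Z\big)$, the desired target. In case~1 this is merely a product of $M$ single-qubit unitaries and uses no ancillas. In case~2 I would compose the $M$ block encodings on a single \emph{shared} ancilla register: because the target of each block encoding is an exact unitary, each acts on the flagged subspace as $\ketbra{0}{0}\otimes V_j$ up to $O(\epsilon')$ --- the structural observation already used in the proof of Lemma~\ref{lem:square} --- so the product is again a block encoding of $\prod_j V_j$ with ancilla count $O(\max_i\ell_i)$ and subnormalization $(1-O(\epsilon'))^M=1-O(M\epsilon')$.

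For the error and cost accounting I would use the sub-additivity of errors under composition (Box~4.1 of~\cite{nielsen2001quantum}): the $M$-fold product has error $O(M\epsilon')$, so taking $\epsilon'=\Theta(\epsilon/M)$ yields total error $\epsilon$ while keeping the subnormalization at $1-O(\epsilon)$. The total query and gate count is then $M$ times the per-summand cost of Lemma~\ref{lem:PMP} evaluated at error $\epsilon/M$; since $\log(M/\epsilon)=\log(1/\epsilon)+\log M$ and $\mu_j\le\max_j\mu_j$, passing from $\epsilon'$ to $\epsilon/M$ only inflates the already-logarithmic factors inside the Lemma~\ref{lem:PMP} bounds, which is exactly what the $\widetilde{\mathcal O}$ notation absorbs --- giving the two claimed scalings.

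The step I expect to be the main obstacle is the composition of \emph{inexact} block encodings in case~2: I need to check that chaining $M$ approximate block encodings of unitaries on one shared ancilla register costs only additive error, does not drive the subnormalization below $1-O(\epsilon)$, and does not inflate the ancilla register. This relies entirely on each target being exactly unitary --- the same fact exploited in Lemma~\ref{lem:square} --- so that a flagged application of one block encoding leaves no first-order residual garbage for the next factor to spoil. A minor secondary check is that absorbing the coefficients $a_j$ into the time parameter $t$ of the Jacobi--Anger exponentiation alters only $|t|$-dependent constants and not the asymptotic query count.
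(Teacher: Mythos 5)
Your proposal is correct and follows essentially the same route as the paper: invoke Lemma~\ref{lem:PMP} once per monomial, compose the $M$ resulting (commuting, diagonal) rotations in series, and rescale the per-term error budget to $\Theta(\epsilon/M)$, absorbing the resulting logarithmic inflation into the $\widetilde{\mathcal{O}}(\cdot)$ notation. The only cosmetic difference is that the paper absorbs the coefficients $a_j$ by introducing a fictitious angle $\theta_{i,j}=a_j$ as an extra factor in each monomial, whereas you set $t=a_j$ in the Jacobi--Anger exponentiation step; both mechanisms are already sanctioned by Corollary~\ref{cor:PUP} and neither changes the asymptotics.
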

\begin{proof}
The proof of this result is a trivial consequence of Lemma~\ref{lem:PMP}.  First note that while we have weights $a_j$ we can take $a_j=1$ without loss of generality because we can always introduce a fictitious $\theta_{i,j} = a_j$.  This at most increases the cost by a constant factor.  In turn, the error tolerance for each gate needed per term increases by a constant factor, which is irrelevant to the $\tilde{\mathcal{O}}(\cdot)$ scaling.

Next note that we can add such polynomials by simply repeating each monomial in series.  There are $M$ such monomials and so the cost is $M$ times the maximum of the previous costs.  However, the error per exponential needs to be reduced by a factor of $M$ to ensure that the $\epsilon$ target of the block-encoding is reached.  As the $\epsilon$-scaling of both versions of the cost statements in Lemma~\ref{lem:PMP} is logarithmic, this leads to multiplicative logarithmic factors.  Such factors do not appear in the $\widetilde{\mathcal{O}}(\cdot)$ notation above and so are dropped from the final cost estimates.
\end{proof}

\subsection{Applications of M-QSP to Coulomb Potential Calculations}
In this section we show how our M-QSP scheme can be applied to a simple but important example, the two-dimensional Coulomb potential $V(x,y) = 1/\sqrt{(x_1-x_2)^2 + (y_1 - y_2)^2}$. Suppose we are given oracle access to four unitaries (and their inverses)
\begin{equation}
\lbrace e^{ix_{1} Z}, e^{iy_{1} Z}, e^{ix_{2} Z}, e^{iy_{2}Z} \rbrace
\end{equation}
where the $(x_{1}, y_{1})$ and $(x_{2}, y_{2})$ correspond to the positions of point charge $1$ and point charge $2$ respectively appropriately scaled by a scaling factor $\alpha$ to $(x_{1}, y_{1})$ and $(x_{2}, y_{2})$ such that $|x_{i}| \leq \frac{1}{4}$ and $|y_{i}| \leq \frac{1}{4}$. This is to ensure that the spectral norm condition of our phase squaring algorithm is met.
First, we note that $e^{i (x_{1} - x_{2}) Z} = e^{ix_{1} Z} e^{-ix_{2} Z} $ and $e^{i (y_{1} - y_{2}) Z} = e^{iy_{1} Z} e^{-iy_{2} Z} $, both products are achievable through direct multiplication of the respective signal unitaries. Then, using our phase squaring algorithm on $e^{i (x_{1} - x_{2}) Z}$, we can obtain a block-encoding of $e^{i (x_{1} - x_{2})^{2} Z}$ and similarly for $e^{i (y_{1} - y_{2})^{2} Z}$. By block-encoded multiplication of these two resulting unitaries, we can thus obtain $e^{i \left[(x_{1} - x_{2})^{2} + (y_{1} - y_{2})^{2} \right] Z}$. The following theorem  (stated as Corollary 67 in~\cite{Gily_n_2019}) will be useful for the next step.
\begin{theorem}[Polynomial approximations of negative power functions]
\label{thm:polyapproxNPF}
    Let $\epsilon, \delta \in (0, \frac{1}{2} ]$ and let $f(x) := \frac{\delta^{c}}{2}x^{-c}$, then there exist even/odd polynomials $P$,$P' \in \mathbb{R}[x]$, such that $\norm{ P - f }_{[\delta,1]} \leq \epsilon$, $\norm{ P }_{[-1,1]} \leq 1$ and similarly $\norm{ P' - f }_{[\delta,1]} \leq \epsilon$, $\norm{ P' }_{[-1,1]} \leq 1$, moreover the degree of the polynomials are $\mathcal{O} \left( \frac{\max[1,c]}{\delta} \log \left( \frac{1}{\epsilon}\right) \right)$.
\end{theorem}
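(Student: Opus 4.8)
The plan is to reduce the statement to the core analytic fact, recorded as Corollary~67 of~\cite{Gily_n_2019}, that the power $x^{-c}$ admits a good bounded polynomial approximant on $[\delta,1]$, and then to extract from it both an even and an odd representative. The engine is the Gamma-function representation $x^{-c} = \frac{1}{\Gamma(c)}\int_0^\infty t^{c-1}e^{-xt}\,dt$, valid for $x>0$, which exhibits $f(x)=\frac{\delta^c}{2}x^{-c}$ as a continuous linear combination of decaying exponentials, each of which is individually approximable by a low-degree polynomial.

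First I would truncate the integral to $t\in[0,T]$. For $x\in[\delta,1]$ the discarded tail is bounded by $\frac{1}{\Gamma(c)}\int_T^\infty t^{c-1}e^{-\delta t}\,dt$, which decays exponentially once $T$ exceeds a constant times $(\max[1,c]+\log(1/\epsilon))/\delta$; after the $\delta^c/2$ rescaling this costs at most $\epsilon/3$. Next I would replace each $e^{-xt}$ in the truncated integrand by a polynomial approximation of degree $\mathcal{O}(t+\log(1/\epsilon))$ -- this is the hyperbolic analogue of the Jacobi--Anger/Chebyshev truncation estimates already used in the excerpt for $\cos(xt)$ and $\sin(xt)$, and is exactly the kind of bound supplied by Lemma~70 and Corollary~66 of~\cite{Gily_n_2019} (one may write $e^{-xt}=\cosh(xt)-\sinh(xt)$ and truncate the modified-Bessel expansions). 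Integrating these approximants against $\frac{t^{c-1}}{\Gamma(c)}$ over $[0,T]$ and collecting terms yields a single polynomial $\widetilde P$ with $\|\widetilde P - f\|_{[\delta,1]}\le 2\epsilon/3$ whose degree one can keep within $\mathcal{O}\!\big(\tfrac{\max[1,c]}{\delta}\log(1/\epsilon)\big)$; achieving the linear-in-$1/\delta$ scaling (rather than $1/\delta^2$) is the place where one must be careful, e.g.\ by working with a representation adapted to $x^2$ or by using exponential approximants that stay bounded off $[\delta,1]$.

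The polynomial $\widetilde P$ still has two defects: it has no prescribed parity, and because it tracks $x^{-c}$, which blows up at the origin, it can be large on $[-1,\delta)$, violating the required bound $\|\cdot\|_{[-1,1]}\le 1$. Both are cured by multiplying $\widetilde P$ by a polynomial window $R$ of degree $\mathcal{O}\!\big(\tfrac{\max[1,c]}{\delta}\log(1/\epsilon)\big)$ that equals $1$ up to $O(\epsilon)$ on $[\delta,1]$ and is suppressed below $O(\epsilon)/\|\widetilde P\|_{[-1,1]}$ on $[-1,\delta)$ -- such windows are standard, being built from the polynomial sign/rectangle approximations of~\cite{Low_2017,Gily_n_2019}. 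One builds $R$ (hence the product) with a definite parity; after a harmless $1-O(\epsilon)$ rescaling guaranteeing $\|(\widetilde P R)\|_{[-1,1]}\le1$ I would set $P(x)=\tfrac12\big((\widetilde P R)(x)+(\widetilde P R)(-x)\big)$ and $P'(x)=\tfrac12\big((\widetilde P R)(x)-(\widetilde P R)(-x)\big)$. On $[\delta,1]$ the reflected term is evaluated on $[-1,-\delta]$, where $R$ annihilates it, so both $P$ and $P'$ still approximate $f$ to $O(\epsilon)$; symmetrization cannot increase the sup-norm, so $\|P\|_{[-1,1]},\|P'\|_{[-1,1]}\le1$. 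To land on $f$ rather than $f/2$ one simply runs the construction on $2f$, which is bounded by $1$ on $[\delta,1]$.

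The main obstacle I anticipate is the degree accounting in the second and third steps: simultaneously choosing the truncation point $T$, the degrees of the exponential approximants, and the steepness of the window so that the windowed product both suppresses the super-polynomial growth of $\widetilde P$ toward $x=-1$ and has total degree only $\mathcal{O}\!\big(\tfrac{\max[1,c]}{\delta}\log(1/\epsilon)\big)$, and then certifying $\|P\|_{[-1,1]}\le 1$ -- which is tight, since $\delta^c x^{-c}=1$ at the left endpoint $x=\delta$. The tail estimate, the exponential polynomialization, and the parity symmetrization are otherwise routine given the cited lemmas.
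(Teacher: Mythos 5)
The paper never proves this statement: it is imported verbatim as Corollary~67 of~\cite{Gily_n_2019}, so there is no in-paper argument to compare against, and your sketch must be judged against the proof in that reference. It takes a genuinely different route. The reference feeds the binomial expansion of $x^{-c}$ around $x_0=1$ (radius $1-\delta$ plus a small buffer) into a general Taylor-series approximation lemma (Corollary~66 there), which simultaneously yields the $\mathcal{O}\left(\frac{\max[1,c]}{\delta}\log\frac{1}{\epsilon}\right)$ degree, the $\epsilon$-accuracy on $[\delta,1]$, and — via the buffer — the boundedness on all of $[-1,1]$, after which parity is imposed by symmetrization. Your route instead uses the Mellin/Laplace representation $x^{-c}=\frac{1}{\Gamma(c)}\int_0^\infty t^{c-1}e^{-xt}\,dt$, polynomializes each exponential, and restores boundedness with a rectangle window; this is closer in spirit to the Childs--Kothari--Somma treatment of $1/x$~\cite{childs2017quantum} and is more constructive, but it is not how the cited result is obtained.

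More importantly, the step you yourself flag as delicate is, as written, a genuine gap. Any polynomial that approximates $e^{-xt}$ to additive error $\epsilon$ on all of $[-1,1]$ must have magnitude at least $e^{t}-\epsilon$ somewhere, so after integrating up to $T=\Theta\left(\frac{\max[1,c]}{\delta}\log\frac{1}{\epsilon}\right)$ your $\widetilde P$ can have $\|\widetilde P\|_{[-1,1]}$ as large as $e^{\Theta(T)}=(1/\epsilon)^{\Theta(\max[1,c]/\delta)}$. The window $R$ must then suppress to level $\epsilon'=\epsilon\, e^{-\Theta(T)}$ across a transition of width $\mathcal{O}(\delta)$, and the standard rectangle approximations of~\cite{Low_2017} cost degree $\Theta\left(\frac{1}{\delta}\log\frac{1}{\epsilon'}\right)=\Theta\left(\frac{\max[1,c]}{\delta^2}\log\frac{1}{\epsilon}\right)$ — a full factor of $1/\delta$ worse than the claimed bound. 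To recover the stated degree you need exponential (or direct) approximants that remain $\mathcal{O}(1)$-bounded on all of $[-1,1]$ while accurate on $[\delta,1]$, which is precisely what Corollary~66 of~\cite{Gily_n_2019} packages; the $c=1$ trick $\frac{1-(1-x^2)^b}{x}$ does not obviously generalize to arbitrary $c$. Your tail truncation, the parity symmetrization (including running the argument on $2f$), and the final $1-\mathcal{O}(\epsilon)$ rescaling are otherwise sound.
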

Note that the choice of a fixed $\delta \in (0 ,\frac{1}{2}]$ necessitates the promise that our oracles (corresponding to the four signal unitaries) collectively guarantee that $(x_{1} - x_{2})^{2} + (y_{1} - y_{2})^{2} \geq \delta$ for our method to work. Thus, by taking $c = \frac{1}{2}$ and with an appropriately small choice of a fixed $\delta$ that meets the range requirements of our polynomial argument, we can therefore implement the function $f(x) := \frac{\sqrt{\delta}}{2} x^{-\frac{1}{2}}$ via an appropriate approximating polynomial in the following steps. First, we perform a logarithmic block-encoding of our previously obtained $e^{i \left[(x_{1} - x_{2})^{2} + (y_{1} - y_{2})^{2} \right] Z}$, thereby recovering $[(x_{1} - x_{2})^{2} + (y_{1} - y_{2})^{2}] Z$. Now,  Theorem \ref{thm:polyapproxNPF} can be used to obtain a polynomial block-encoding of $[(x_{1} - x_{2})^{2} + (y_{1} - y_{2})^{2}]^{-\frac{1}{2}}Z$ with degree $\mathcal{O} \left( \frac{1}{\delta} \log \left( \frac{1}{\epsilon}\right) \right)$. If we then perform a measurement in the computational basis, either measurement outcome will give the correct magnitude for the corresponding Coulomb potential $1/\sqrt{(x_{1} - x_{2})^{2} + (y_{1} - y_{2})^{2}}$ between the two point charges as required. 


\section{Application: Iterated QSP and Bosonic Simulations}\label{sec:boson}
In our following application, we will use  the $(W_{X}, S_{Z}, \langle + | \cdot | + \rangle)$-QSP convention.  Under this scheme, we would like to obliviously transform our given signal unitary $W_{X} \equiv e^{-i\theta \tau X}
$ into $e^{-i\sqrt{\theta}\tau X}$. 
Similar to the previous strategies used in this paper, we will first determine an appropriate approximating polynomial for our desired transformation
and then apply appropriate QSP phases corresponding to this approximating polynomial to our signal rotation operators interleaved alternately with a (block-encoded) signal unitary. Before going into technical details, we would like to provide a brief description of the the physical problem here. 


Simulating bosonic systems on quantum computers is a surprisingly difficult problem in part because of the fact that bosons do not obey the Pauli exclusion principle.  The Pauli exclusion principle guarantees that at most one particle can be in a given orbital.  This implies that occupations in fermionic systems, can be tracked with a single qubit.  Bosonic systems on the other hand can have in principle an unbounded Hilbert-space dimension.  In particular, we have that the boson number in a mode is given by
\begin{equation}
    n=a^\dagger a = \sum_j j \ketbra{j}{j},
\end{equation}
where in occupation basis we have that
\begin{equation}
    a^\dagger \ket{j} = \sqrt{j+1}\ket{j+1}
\end{equation}
These operators are non-Hermitian and typically arise in simulations with their Hermitian conjugates in a form similar to
\be 
H = a+ a^\dagger= \mqty(0 & 1 & 0 & 0 & . & . & 0 \\ 1 & 0 & \sqrt{2} & 0 & . & . & 0\\ 0 & \sqrt{2} & 0 & \sqrt{3} & . & . & 0\\ 0 & 0 & \sqrt{3} & 0 & . & . & 0 \\ . & . & . & . & . & . & . \\ . & . & . & . & . & 0 & \sqrt{n} \\ . & . & . & . & . & \sqrt{n} & 0 )
\label{eq:boson_hamiltonian}
\ee 
The primary challenge that arises when simulating this Hamiltonian stems from the presence of the square-roots.  Conventional approaches to computing square roots on quantum computers rely on using reversible circuits to compute a binary representation of the bitstring~\cite{HanerSvore}.  While these algorithms are efficient, the number of qubits needed to compute the square root can be prohibitively large using methods such as Newton iteration.  Our aim is to provide a new way of evaluating the square roots in phase that uses only a constant number of qubits and is not limited in the precision of a fixed point representation.

Our approach for performing the simulation involves performing a decomposition of the element-wise square of such matrices into a sum of one-sparse Hamiltonians.  We then use iterated QSP to compute square roots of the matrix elements of the sparse matrices.

\subsection{Simulation of Banded Matrices}
In order to simulate Eq.~\eqref{eq:boson_hamiltonian}, we take a very different approach. We start from the Hamiltonian simulation of banded matrices. The advantage of banded matrices is that they are sparse and can be easily decomposed into 1-sparse matrices that can be simulated using a single query and one ancilla. To see this, let us consider a Hamiltonian $H$ of the form,
\be 
H = \mqty(0 & 1 & 0 & 0 & . & . & 0 \\ 1 & 0 & 2 & 0 & . & . & 0\\ 0 & 2 & 0 & 3 & . & . & 0\\ 0 & 0 & 3 & 0 & . & . & 0 \\ . & . & . & . & . & . & . \\ . & . & . & . & . & 0 & n \\ . & . & . & . & . & n & 0 )
\ee 
We want to do Hamiltonian simulation $e^{-iHt}$ using sparse decomposition of the Hamiltonian matrix.
We note that $H$ can be decomposed into two segments each containing $1$-sparse matrices, $H_A$ and $H_B$,
\bea 
H_A = \mqty(0 & 1 & 0 & 0 & . & . & 0 \\ 1 & 0 & 0 & 0 & . & . & 0\\ 0 & 0 & 0 & 3 & . & . & 0\\ 0 & 0 & 3 & 0 & . & . & 0 \\ . & . & . & . & . & . & . \\ . & . & . & . & . & 0 & n \\ . & . & . & . & . & n & 0 ) ;\;
H_B = \mqty(0 & 0 & 0 & 0 & . & . & 0 \\ 0 & 0 & 2 & 0 & . & . & 0\\ 0 & 2 & 0 & 0 & . & . & 0\\ 0 & 0 & 0 & 0 & . & . & 0 \\ . & . & . & . & . & . & . \\ . & . & . & . & . & 0 & 0 \\ . & . & . & . & . & 0 & 0 )
\eea 
Once we have $1$-sparsified $H$, our next goal is to transform the block-diagonal Hamiltonian into single qubit forms. In order to do this, we make use of the fact that each block resembles $\sigma_x$, thus performing $e^{-it H_{A}}$ is,
\be
e^{-it H_{A}} = e^{-it X}\oplus e^{-i3tX}..\oplus e^{-intX}
\label{eq: ha_evolution}
\ee 
With the simple form of eq.~\eqref{eq: ha_evolution}, the key idea is to transfer the information of the original Hilbert space into single qubit, then apply each of the rotations on a single qubit and then transform it back to the original Hilbert space. 
In order to do this, let us first define an oracle $O_f$, such that, 
\be 
O_f\ket{x}\ket{y} = \ket{x}\ket{y \oplus f(x)},
\label{eq:oracle}
\ee 
where $f(x)$ is the index of the non-zero element in row $x$ (note, $H$ is $1$-sparse). Since $H$ is Hermitian, $f(f(x)) = x$. If we consider the case of $H_A$, we may write $f$ as,
\bea
    f(x) &=& x-1\;:\; \text{if}\; x \; \text{is even} \notag \\
    &=& x+1 \;:\; \text{if}\; x \; \text{is odd} 
\eea 
Taking the advantage of the simple form of the Hamiltonian structure, we can construct an oracle without any ancilla qubit using the following circuit, 
\begin{center}$O_f \equiv$ 
\begin{quantikz}
\lstick{$\ket{y}$} &\qwbundle{n} & \targ{0} \qw & \qw & \gate{-1}\qw & \gate{+2}\qw &\rstick{$\ket{y \oplus f(x)}$}\qw \\
\lstick{$\ket{x}$} &\qwbundle{n}  & \ctrl{-1} & \qw & \qw &\ctrl{-1} & \rstick{$\ket{x}$}\qw 
\end{quantikz}
\end{center}
Now, let us consider a quantum state $\ket{\psi}$ in the 2-D block-diagonal subspace, 
\be
\ket{\psi} = \alpha \ket{x} + \beta \ket{y}.
\ee
By using the principle of linearity, $\ket{\psi}$ has the same dimension as the number of rows. 
Adding an ancilla qubit $\ket{a}$ initialized to $zero$ and applying the oracle $O_f$ on $\ket{\psi}$ gives,
\bea
O_f\ket{\psi}\ket{0}\ket{0}_{a} &=& O_f\left(\alpha \ket{x}\ket{0} + \beta \ket{y}\ket{0}\right)\ket{0}_{a} \notag \\
&=& \left(\alpha \ket{x}\ket{f(x)} + \beta \ket{y}\ket{f(y)}\right)\ket{0}_{a} \notag \\
&=& \left(\alpha \ket{x}\ket{f(x)} + \beta \ket{f(x)}\ket{f(f(x))}\right)\ket{0}_{a} \notag \\
&=& \left(\alpha \ket{x}\ket{f(x)} + \beta \ket{f(x)}\ket{x}\right)\ket{0}_{a} 
\label{eq:step_1}
\eea 
In the last step, we have choses $y=f(x)$
To this end, we define a comparator gate \texttt{CMP},
\be
\texttt{CMP}\ket{x}\ket{y}\ket{0} = \ket{x}\ket{y}\ket{y<x}
\ee 
Next, if we choose $y = f(x)$,  and apply \texttt{CMP} to the R.H.S, eq.\eqref{eq:step_1}, (assuming $f(x) > x$) we get,
\bea 
\texttt{CMP}\left[ \left(\alpha \ket{x}\ket{f(x)} + \beta \ket{f(x)}\ket{x}\right]\ket{0}_{a} \right]\notag \\
= \alpha \ket{x}\ket{f(x)}\ket{0}_{a} + \beta \ket{f(x)}\ket{x}\ket{1}_{a}
\label{eq:step_2}
\eea 
Next, we apply a controlled swap (\texttt{CSWP}) that interchanges between qubit registers $\ket{x}$ and $\ket{y}$ based on the ancilla $\ket{a}$ on eq,~\eqref{eq:step_2}
\bea
\alpha \ket{f(x)}\ket{x}\ket{0}_{a} + \beta \ket{f(x)}\ket{x}\ket{1}_{a} \notag\\
= \ket{x}\ket{f(x)} \left(  \alpha \ket{0}_{a} + \beta \ket{1}_{a} \right)
\label{eq:step_3}
\eea 
Thus we have transferred the coefficients ($\alpha, \beta$) of the state $\alpha \ket{x} + \beta \ket{f(x)}$ to a single ancilla qubit $\ket{a}$. 
As a next step, we need to apply the exponential of the relevant $2
\times 2$ block to the single ancilla. More precisely, we need to carry out the operation ($e^{-iH_{A}\tau} = e^{-ix\tau X}$) to the single ancilla. We can achieve this by using a set of controlled $R_X$ rotation as shown in Fig.~\ref{fig:U1_circuit}). We call this operation as $U_1$,
\begin{figure}
    \centering
    \begin{quantikz}
\lstick{$\ket{0}_a$} &  \gate{R_X(2\tau)}  \qw & \gate{R_X(2 \cdot 2\tau)} & \qw & \ldots & \gate{R_X(2 \cdot2^j\tau)} \qw \\
\lstick[wires=4]{$\ket{x}$}
& \ctrl{-1} & \qw & \qw & \ldots & \qw & \qw\\
& \qw &\ctrl{-2} & \qw  & \ldots &  \qw  & \qw\\
& \ldots \\
& \qw &\qw  & \qw &  \qw &\ctrl{-4} & \qw \\
\end{quantikz}
    \caption{$U_1$ implementation}
    \label{fig:U1_circuit}
\end{figure}
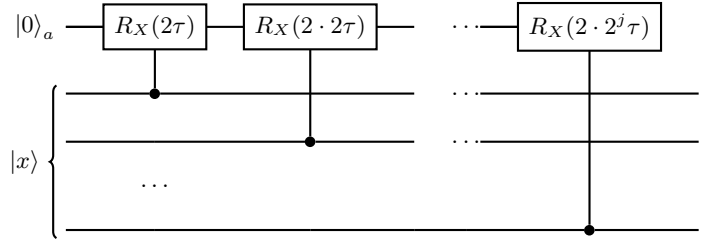
applying which on Eq.~\eqref{eq:step_3} will cause a conditional rotation of the ancilla state ($\alpha\ket{0} + \beta \ket{1}$),
\bea
U_{1}  \ket{x}\ket{f(x)} \left(  \alpha \ket{0}_{a} + \beta \ket{1}_{a} \right) &\rightarrow &
\ket{x}\ket{f(x)}e ^{-ix \tau X} \left(  \alpha \ket{0}_{a} + \beta \ket{1}_{a} \right) \\
&\equiv &  \ket{x}\ket{f(x)}R_{X}(2x\tau) \left(  \alpha \ket{0}_{a} + \beta \ket{1}_{a} \right)
\label{eq:U1}
\eea 
where,
\begin{equation}
    R_{X}(\tau) = \mqty[\cos (\frac{\tau}{2}) &-i \sin(\frac{\tau}{2}) \\ -i \sin(\frac{\tau}{2}) & \cos (\frac{\tau}{2})]
\end{equation}



\begin{figure}[t!]
\begin{center}
\begin{quantikz}
\lstick{$\ket{0}_a$} &\qw  & \qw &\qw   &\gate[3, nwires={3}, alternate]{\texttt{CMP}} &\ctrl{1}  &\gate[2]{U_1} &\ctrl{1} &\gate[3, nwires={3}, alternate]{\texttt{CMP}} & \qw \\
\lstick{$\ket{x}$}&\qwbundle{n} & \qw  & \gate[2]{O_f} \qw &\qw  & \gate[2]{\texttt{SWP}}\qw &  \qw & \gate[2]{\texttt{SWP}}\qw &\qw & \gate[2]{O_f}& \qw \\
\lstick{$\ket{y}$}&\qwbundle{n} &  \qw & \qw &  \qw  &\qw  & \qw & \qw &\qw \qw  &\qw &\qw\\
\end{quantikz}
\end{center}
\caption{Circuit for implementing an $X$ rotation on an irreducible two dimensional space using the unitary $U_1$.}
\end{figure}
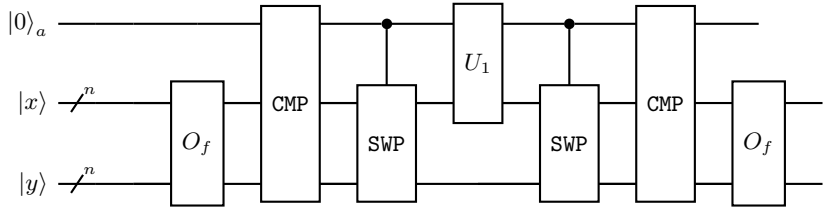

If we set an upper bound for $x<n_{cut}$, we will need $n_q \sim \log{n_{cut}}$ qubits, such that $x_{cut} = x_{0} + 2x_{1} + ... + 2^{n_{q}-1}x_{n_{q}}$, where $x_{j}\in \{0,1\}$.  Thus, number of controlled $R_{X}$ needed for a circuit without the square root will be $n_q = \log{n_{cut}}$.
Finally, by undoing the initial unitary operations we will obtain $ e^{-ix X\tau} \left(  \alpha \ket{x}  + \beta\ket{f(x)}  \right)\ket{0}\ket{0}_{a}$. Note that $e^{-ix \tau X}$ is the same as $e^{-i\tau H_{A}}$. By the superposition principle,  we finally obtain $e^{-i\tau H_{A}} \ket{\psi}\ket{0}\ket{0}_{a}$. Similarly, we can also simulate $e^{-i \tau H_{B}} \ket{\psi}$ and combine the two segments using a Product Formula(PF) approach \cite{berry2007efficient} to obtain $e^{-i \tau H_{B}}e^{-i \tau H_{A}} \ket{\psi}$.

\subsection{Formulation}
For the simulation of the Bosonic Hamiltonian in Eq~\eqref{eq:boson_hamiltonian}, we can follow a similar procedure of two-coloring (dividing the Hamiltonian into $H_A$ and $H_B$) and then applying rotation $R_{X}(2\sqrt{x}\tau)( = e^{-i\sqrt{x}\tau X})$ to the ancilla qubit based on the values of $\ket{x}$ for each $H_A$ and $H_B$. However, please note that unlike the example of the banded matrices, in our current case we need to incorporate the `square root' in  $R_{X}(2\sqrt{x}\tau) := e^{-i\sqrt{x}\tau X}$ in the exponential. 
In other words, our goal is to use QSP in two stages to transform $R_{X}(2x\tau)$ into $R_{X}(2\sqrt{x}\tau)$. Writing explicitly, the final rotation operator on the ancilla qubit will look like 
\be 
R_{X}(2\sqrt{x}\tau) = \mqty[\cos{(\sqrt{x}\tau)} & i\sin{(\sqrt{x}\tau)} \\ i\sin{(\sqrt{x}\tau)} & \cos{(\sqrt{x}\tau)}] .
\label{eq:rx_final}
\ee 
Note that, $ \cos{(\sqrt{x}\tau)} = \sum_{j=1}^{\infty}\frac{(x\tau^2)^{j}}{(2j)!}$. If we can block-encode $x\tau^2$, then we can apply GQSP to achieve our final goal of implementing Eq.~\eqref{eq:rx_final}.
So, we start from a $\frac{\pi}{2}$ shifted $R_X$ rotation 
\bea 
R_{X}\left(2(x\tau^2 - \frac{\pi}{2})\right) &=& \mqty[\cos{(x\tau^2 - \frac{\pi}{2})} & -i\sin{(x\tau^2 - \frac{\pi}{2})} \\ -i\sin{(x\tau^2 - \frac{\pi}{2})} & \cos{(x\tau^2 - \frac{\pi}{2})}] \notag \\
&=& \mqty[\sin{(x\tau^2 )} & i\cos{(x\tau^2 )} \\ i\cos{(x\tau^2)} & \sin{(x\tau^2 )}]
\label{eq:qsp_operator}
\eea
and use it with $U_1$ in eq.~\eqref{eq:U1}. The reason for a $\frac{\pi}{2}$-shifted rotation is that we could implement $\arcsin$ on $\sin{(x\tau^2)}$ using QSP and achieve the block encoding of $x\tau^2$ (see Eq.~\eqref{eq:polyarcsin_bound}). In other words, we define a new $\Tilde{U}_1$ such that,
\bea 
\Tilde{U}_1  \ket{x}\ket{f(x)} \left(  \alpha \ket{0}_{a} + \beta \ket{1}_{a} \right) &\rightarrow &
  \ket{x}\ket{f(x)}R_{X}(2x\tau^2 - \pi) \left(  \alpha \ket{0}_{a} + \beta \ket{1}_{a} \right)
\label{eq:U1a}
\eea 
and then apply iterated QSP to achieve the following,
\be 
\texttt{iQSP} \left[ R_{X}\left(2(x\tau^2 - \frac{\pi}{2})\right) \right] = \mqty[\cos{(\sqrt{x}\tau)} & i\sin{(\sqrt{x}\tau)} \\ i\sin{(\sqrt{x}\tau)} & \cos{(\sqrt{x}\tau)}] .
\label{eq:iqsp}
\ee 
By setting $a = \sin{(x\tau^2)}$ and following the GQSP convention in Eq.~\eqref{eq:gqsp} and from Eq.~\eqref{eq:qsp_operator} and ~\eqref{eq:iqsp} we can write, 
\bea 
P(a) &=& \cos{( \sqrt{ \arcsin(a)} )} \label{eq: P}\\
Q(a) &=& \sin{ (\sqrt{ \arcsin(a)})}\label{eq:Q}
\eea 
Since $P(a)$ is a composite function, we can use Lemma 70 in ref \cite{Gily_n_2019} to implement $\frac{2}{\pi}\arcsin(a)\equiv \frac{2}{\pi} x\tau^2$ first.  Subsequently, we will follow the convention where the superscripts will denote the number of qubits in the registers and the subscripts will denote the corresponding qubit registers. We use the polynomial expansion of $\frac{2}{\pi}\arcsin(a)$ in \cite{Gily_n_2019} up to $k$ terms  and construct a unitary $U_{\Phi}$ such that, 
\begin{equation*}
U_{\Phi}= \left( I_x \otimes e^{i \phi_{0} Z}   \right) \prod_{j=1}^{k}  \Tilde{U}_1 \left( I_x \otimes  e^{i \phi_{j} Z} \right).
\end{equation*}
  where  $\Phi$ is a vector of phases $\Phi \in \mathbb{R}^k $.  Since  $\arcsin(a)$ is not non-analytic at $a = \pm 1$, we limit our value of $a$ to $a \in [-1 + \delta, +1, 1 - \delta], \delta > 0$. Truncating $\arcsin(a)$ up to $k$ terms introduces an error $\epsilon_1$. $U_{\Phi}$ is then a $ \left( \frac{\pi}{2}, n+1 ,\epsilon_1 \right) $ block encoding of $\arcsin (a)$ such that,
 \begin{equation}
\abs\bigg{ \arcsin(a) - \frac{ \pi}{2} \left( \langle x | \otimes \bra{0}_{a} \right) U_{\Phi} \left( | x \rangle    \otimes \ket{0}_{a} \right) }_{a\in [-1+\delta,1-\delta]} \leq \epsilon_1
\label{eq:error_1}
\end{equation}
 This is shown in  Fig.~\ref{fig:be_ntau2}. From \cite{Gily_n_2019} we have,
\be
k = {\mathcal{O}\left(\frac{1}{\delta} \log(\frac{1}{\epsilon_1})\right)}
\label{eq:error_1a}
\ee 
for $\delta,\epsilon_1 \in ( 0,\frac{1}{2} ]$ . 
\begin{figure}[t!]
\begin{center}
\tiny
  \begin{quantikz}
 \lstick{$\ket{0}_a$} & \qw &\qw   &\qw  &\gate[3, nwires={3}, alternate]{\texttt{CMP}} &\ctrl{1} &\gate[1]{R_{X}(\phi_0)} &\gate[2, nwires={2}, alternate]{\Tilde{U}_1}  &\gate[1]{R_{X}(\phi_1)} &\qw &\qw & \ \ldots\ &\gate[2, nwires={2}, alternate]{\Tilde{U}_1} &\gate[1]{R_{X}(\phi_k)}   &\ctrl{1} &\gate[3, nwires={3}, alternate]{\texttt{CMP}} & \qw \\
\lstick{$\ket{x}$} &\qwbundle{n} &\qw  & \gate[2]{O_f} \qw  &\qw & \gate[2]{\texttt{SWP}}&\qw  &\qw &\qw &\qw &\qw   & \ \ldots\   &\qw &\qw & \gate[2]{\texttt{SWP}}\qw &\qw & \gate[2]{O_f}& \qw \\
\lstick{$\ket{y}$} &\qwbundle{n} &  \qw  &\qw   &\qw &\qw &\qw &\qw &\qw &\qw  &\qw  & \ \ldots\  &\qw &\qw &\qw &\qw &\qw &\qw \\
\end{quantikz}   
 \end{center}
\caption{Circuit for $\arcsin (a)$ on the ancilla using QSP. $U_1$ (defined in Eq.~\eqref{eq:U1}) contains the controlled $R_{X}$ gate. The  gate set $\{R_{X}(\phi_j), j = 0,..,k; U_1\}$ constitutes $U_{\Phi}$. }\label{fig:be_ntau2}
\end{figure}
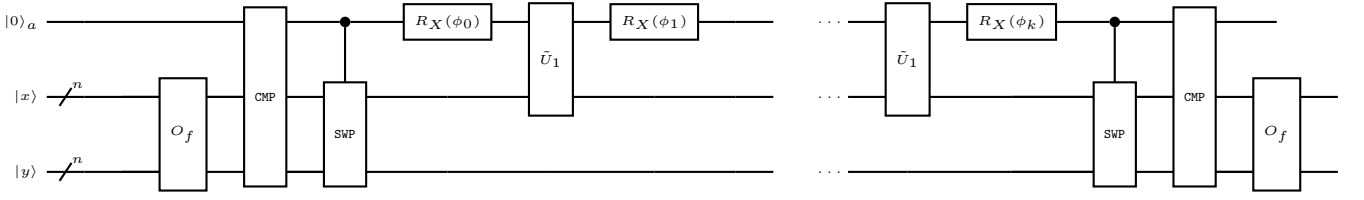
The value of $\delta$ sets an upper bound for the input qubit register $x$. Since $a \le 1-\delta$, we will have
 \be
 x_{max} \le \frac{\arcsin(1-\delta)}{\tau^2}
 \ee 

Our final step is to implement $P(a) = \sum_{j=1}^{\infty}\frac{(\arcsin(a))^{j}}{(2j)!}$. Traditional QSP cannot be directly used to implement $P(a)$. This is a drawback that QSP continues to face, and is a notable limitation in our construction of more general polynomial families. Notably, to construct any polynomial like $P(a)$ using the original QSP, one must divide it into two segments by separating its  even and odd components. These segments are then combined using linear combinations of unitaries (LCU), followed by a modified form of amplitude amplification on the resultant circuit. This process necessitates tripling the required operations for Hamiltonian simulation compared to implementing Hamiltonian dynamics directly. Additionally, while the original QSP assures the existence of parameter sets yielding polynomials meeting the specified constraints, determining these parameters has proven to be a complex  task.  This paper employs an advanced Generalized Quantum Signal Processing (GQSP) algorithm, which addresses the aforementioned limitations and establishes a broader framework for the primary objective of implementing Hamiltonian functions. 
Therefore, finally the transformation in Eq.~\eqref{eq: P} up to $l$-order truncated series can be achieved using GQSP by interleaving $U_{\Phi}$ and using in eq.~\eqref{eq:gqsp},
\begin{equation}
    U_{\vec{\theta}, \vec{\omega}} = \left( \prod_{j=1}^{l} R(\theta_j, \omega_j,0) 
    (\dyad{0}\otimes U_{\Phi} + \dyad{1}\otimes I) \right) R(\theta_0, \omega_0,\lambda)
\end{equation}
and using the circuit in Fig.~\ref{fig:full_circuit}.
\begin{figure}[t!]
\begin{center}
\tiny
 \begin{quantikz}
 \lstick{$\ket{0}$} & \qw &\qw &\qw  &\qw &\gate[1]{R(\theta_0,\omega_0,\lambda)} & \octrl{1} &\gate[1]{R(\theta_1,\omega_1,0)} & \octrl{1} &\qw &  \ldots & \octrl{1} &\gate[1]{R(\theta_l,\omega_l,0)} & \qw\\
 \lstick{$\ket{0}_a$} & \qw &\qw   &\qw  &\gate[3, nwires={3}, alternate]{\texttt{CMP}} &\ctrl{1} &\gate[2, nwires={2}, alternate]{U_{\Phi}}  &\qw &\gate[2, nwires={2}, alternate]{U_{\Phi}}   &\qw & \ \ldots\ &\gate[2, nwires={2}, alternate]{U_{\Phi}}    &\ctrl{1} &\gate[3, nwires={3}, alternate]{\texttt{CMP}} & \qw \\
\lstick{$\ket{x}$} &\qwbundle{n} &\qw  & \gate[2]{O_f} \qw  &\qw & \gate[2]{\texttt{SWP}}&\qw   &\qw &\qw &\qw   & \ \ldots\    &\qw & \gate[2]{\texttt{SWP}}\qw &\qw & \gate[2]{O_f}& \qw \\
\lstick{$\ket{y}$} &\qwbundle{n} &  \qw  &\qw   &\qw &\qw &\qw &\qw &\qw  &\qw  & \ \ldots\   &\qw &\qw &\qw &\qw &\qw &\qw\\
\end{quantikz}   
\end{center}
\caption{Quantum circuit for Implementing $P(a)$ up to $l$-order truncation. $U_{\Phi}$ is the block encoding for $\arcsin(a)(=x\tau^2$). The  gate set $\{R(\theta_j,\omega_j,\lambda), j = 0,..,l; U_{\Phi}\}$ constitutes $U_{\vec{\theta}, \vec{\omega}}$.}\label{fig:full_circuit}
\end{figure}
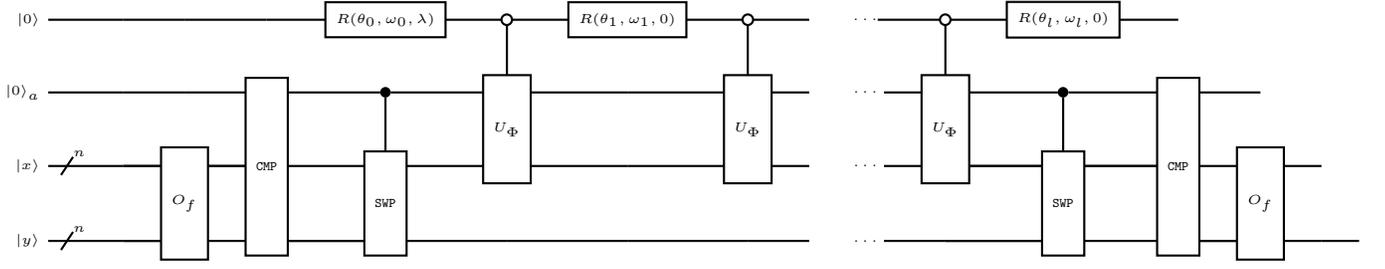

Denoting the $l$-order truncation error of $P(a)$ as $\epsilon_2$,  the phase angles generating a $P$ exist and can be efficiently calculated on a classical computer in time $ \mathcal{O} (poly(l, \log(1/ \epsilon_2)))$. So we can write,

\begin{equation}
\left| \cos{( \sqrt{ \arcsin(a)} )} -  P(a) \right| \leq \epsilon_2
\label{eq:error_2}
\end{equation}

The error due to truncation at the $l$ order is given by,
\begin{equation}
l = \mathcal{O}(\log(1/\epsilon_2))
\label{eq:error_2a}    
\end{equation} 
We repeat the above processes for $H_A$ and $H_B$ and combine them using Product Formula to simulate the full Hamiltonian.


\subsection{Error bound}
Let us call the polynomial transformation for $\arcsin(a)$ using a polynomial truncated up to $k$ terms for as $P_{1}^{(k)}(a)$, whereas the infinite series as $P_{1}(a)$. 
Total approximation error of $\arcsin(a)$ using a polynomial truncated up to $k$ terms and then $\cos{( \sqrt{ \arcsin(a)} )}$ truncated up to $l$ terms is given by,
\bea 
\epsilon &=& \sum_{j=0}^{l} \frac{(P_{1}(a))^j}{2j!} - \frac{(P_{1}^{(k)}(a))^j}{2j!} + \sum_{j=l}^{\infty}\frac{(P(a))^j}{2j!} 
\label{eq:total_error}
\eea 
Using a similar argument as presented in Eq 4.69 in Nielsen and Chuang ~\cite{nielsen2001quantum} we can write
\be 
\abs{U_m U_{m-1}... U_1 - V_{m} V_{m-1}... V_{1}} \leq \sum_{l=1}^{m}\abs{U_l - V_l}.
\ee 
where $U_l, V_l \in \mathbb{R}$ . 
In our case, choosing $U_m = U_{m-1}... =U_1 = P_{1}(a)$ and $V_{m} = V_{m-1}... =V_{1} = P_{1}^{(k)}(a)$ we have,
\bea 
\abs{(P_{1}(a))^j - (P_{1}^{(k)}(a))^j} &\leq& \sum_{l=1}^{j}\abs{P_{1}(a) -P_{1}^{(k)}(a) } \notag \\
&\leq& j\abs{P_{1}(a) -P_{1}^{(k)}(a) }
\label{eq:inequality}
\eea 
Now using the result of Eq.~\eqref{eq:inequality}, we replace in Eq.~\eqref{eq:total_error},
\bea 
&\le& \sum_{j=0}^{l} \frac{j\abs{P_{1}(a) - P_{1}^{(k)}(a)}}{2j!} + \sum_{j=l}^{\infty}\frac{(P(a))^j}{2j!} \notag \\
&\le&  \abs{P_{1}(a) - P_{1}^{(k)}(a)} + \sum_{j=l}^{\infty}\frac{(P(a))^j}{2j!} \equiv \epsilon_1 + \epsilon_2  
\eea 
In the last line we have used the results from Eq.~\eqref{eq:error_1} and Eq.~\eqref{eq:error_2}. 
Using the results from Eq.~\eqref{eq:error_1a} and Eq.~\eqref{eq:error_2a},
Assuming $\epsilon_1 = \epsilon_2 = \frac{\epsilon}{2}$, the total error bound is given in terms of $l$ and $k$,
\be 
k\times l = \mathcal{O} \left( \log^2(1/\epsilon) \right)
\ee 
 The error for using $p$-th order product formula for decomposing $e^{-iH\tau}$ into $e^{-iH_{A}\tau} e^{-iH_{B}\tau}$  scales as $\mathcal{O}(\tau^{p+1})$. 
 Hence the total error after running $N$ Trotter steps using $p$-th order product formula will be,
\bea
\epsilon_{tot} &=& N(\epsilon + \epsilon_{Trotter}) \notag \\
&=& N(\epsilon + \mathcal{O}(2^{p+1} \text{max}\{\norm{H_A},\norm{H_B}\}^{p+1}\tau^{p+1}) ) 
\eea 
where we have used from~\cite{childs2021theory} the following trotter bound which we will further take to be sufficiently small to ensure that the total error after $N$ steps is $\epsilon_{tot}$

\begin{equation}
    \epsilon_{Trotter} = \mathcal{O}(2^{p+1} \text{max}\{\norm{H_A},\norm{H_B}\}^{p+1}\tau^{p+1}) = \frac{\epsilon_{tot}}{2N}
\end{equation}
For a total simulation time $T = N\tau$, we can solve these equations to find an appropriate value of $N$ doing so yields
\bea 
N = \mathcal{O}\left(\frac{(\max\{\|H_A\|,\|H_B\|\} T)^{1+1/p}}{\epsilon^{1/p}} \right)
\eea
Then using the arguments made in~\cite{berry2007efficient} we can adaptively choose $p$ as a function of $\epsilon, T$ to achieve
\begin{equation}
    N= \mathcal{O} \left(\frac{(\max\{\|H_A\|,\|H_B\|\} T)^{1+o(1)}}{\epsilon^{o(1)}} \right)
\end{equation}
where $x^{o(1)}$ is a function that has sub-polynomial scaling with $x$ but not necessarily poly-logarithmic.  As $k,l$ are polylogarithmic functions of $1/\epsilon$,  the overall complexity is then
\begin{equation}
    klN = \widetilde{\mathcal{O}}\left(\frac{(\max\{\|H_A\|,\|H_B\|\} T)^{1+o(1)}}{\epsilon^{o(1)}} \right)
\end{equation}
Here $\widetilde{\mathcal{O}}(\cdot)$ denotes an asymptotic upper bound that neglects sub-dominant poly-logarithmic scalings.

The key point behind this calculation is that, while it is not as time-efficient as the best qubitization based approaches, it requires at most two ancillary qubits.  This means that this approach can be used to perform bosonic simulations on near-term hardware, which constitutes a substantial improvement over Toffoli-based approaches which can be prohibitively memory intensive for existing quantum devices.

\section{Conclusion}
In this paper, we provided a systematic paradigm for combining QSP sequences in order to construct polynomial transformations that would be difficult to achieve directly for traditional single qubit (SU(2)) Quantum Signal Processing. We first showed that we can recursively construct a form of multi-variate QSP within our framework by chaining together QSP sequences that allows the summing of rotation angles. We then provided a QSP construction that allowed us to multiply different signal unitary rotation angles by employing QSP to square the sum of different signal unitary rotation angles and subtracting off all other terms except this product. This gives us an elementary constructive approach to implement any multivariate polynomial function of a set of single qubit unitaries. 

The second application we considered involved applying our method of iterated QSP to the simulation of Bosonic quantum simulations. By taking advantage of our banded Bosonic Hamiltonian's sparsity, we showed how iterated QSP can be used to implement the complex arithmetic of Bosonic quantum simulation using only two ancillary qubits with relative ease. This approach opens up the possibility of near-term quantum simulation of Bosonic systems with cutoffs in occupation greater than 1.

At a high level, the biggest contribution of this work is to provide a concrete method for implementing multivariate quantum signal processing by reducing it to a special case of single variable QSP.  Further, the scaling for the single variable multiplication gadget is interestingly comparable to the complexity of ordinary multiplication.  However, the complexity of the algorithms that we provide are polynomial for low-order polynomial transformations, but as the degree of the polynomial increases the methods that we propose do not scale favorably with the degree of the polynomial considered and further work is needed to understand whether there are fundamental limitations to the scaling that can be achieved with quantum signal processing in the multivariate case or whether the constructions considered here are sub-optimal.  It is our hope that iterated block encodings such as the ones that we consider here form the beginning of such a discussion that will ultimately lead to highly practical methods for performing quantum arithmetic and simulations on quantum computers.
\section*{Acknowledgements}
This work was supported by the “Embedding QC into Many-body Frameworks for Strongly Correlated Molecular and Materials Systems” project, which is funded by the U.S. Department of Energy, Office of Science, Office of Basic Energy Sciences (BES), the Division of Chemical Sciences, Geosciences, and Biosciences (NG).
We thank Jack Ceroni for discussions on modular QSP and feedback on this work.
\bibliographystyle{unsrt}
\bibliography{bib}

\end{document}